\newtheorem{theorem}{Theorem}[section]
\newtheorem{corollary}[theorem]{Corollary}
\newtheorem{lemma}[theorem]{Lemma}
\newtheorem{conjecture}[theorem]{Conjecture}
\newtheorem{example}[theorem]{Example}
\newtheorem{question}[theorem]{Question}
\def\ND{\succeq\kern-10pt/\kern5pt}
\def\Ndelta{\delta\kern-5pt/\kern3pt}
\newenvironment{proof}
{\begin{trivlist}\item[]{Proof:}}{\hfill{$\square$}\noindent\end{trivlist}}
\begin{document}

\title{Characterization of threshold functions: state of the art, some new contributions and open problems}

\author{\normalsize{
Josep Freixas\footnote{
Universitat Polit\`{e}cnica de Catalunya (Campus Manresa), in the Department of  Mathematics;
Av. Bases de Manresa, 61-73, E-08242 Manresa, Spain.}, \
Marc Freixas\footnote{Industrial Engineer working for Cirprotec, E-08233 Terrassa, Spain.} \ and \
Sascha Kurz\footnote{
University of Bayreuth, in the Department of  Mathematics; 95440 Bayreuth, Germany.}
}}
\date{\small{}}
\maketitle
\begin{abstract}
This paper has a twofold scope. The first one is to clarify and put in evidence the isomorphic character of two theories developed in quite different fields: on one side, threshold 
logic, on the other side, simple games. One of the main purposes in both theories is to determine when a simple game is representable as a weighted game, which allows a very 
compact and easily comprehensible representation. Deep results were found in 
threshold logic in the sixties and seventies for this problem. However, game theory has taken the lead 
and some new results have been obtained for the problem in the last 
two decades. The second and main goal of this paper is to provide some new results on this problem and propose several open questions and conjectures 
for future research.
The results we obtain depend on two significant parameters of the game: the number of types of equivalent players and the number of types of shift-minimal 
winning coalitions.

\medskip

\noindent
\emph{Key words:} simple games; weighted games; characterization of weighted games; trade robustness;  invariant-trade robustness, asummability
\noindent
\emph{AMS codes:} 91A12; 06E30; 94C10; 68T27; 92B20
\end{abstract}

\section{Introduction}  \label{SECTIONintroduction}

The study of switching functions goes back at least to Dedekind's 1897 work \cite{dedekind}, in which he determined the exact number of simple games with 
four or fewer players. Since that time these structures have been investigated in a variety of different contexts either 
theoretically~\cite{Gol59, HIP81, HaHo92, HKR00, BoBr03} in the context of Boolean functions or because of their numerous applications: 
neural networks~\cite{AnHo94}, simple games~\cite{vNMo44, Isb56, Isb58, Pel68}, threshold logic~\cite{Elg60, Cho61, Gab61, Hu65, Mur71}, 
hypergraphs~\cite{RRST85}, coherent structures~\cite{Ram90},
learning theory~\cite{Litt88}, complexity theory~\cite{BeWe06}, and secret sharing~\cite{Sim90, Tas07, BTW08}. Several books on neural networks 
have studied these structures:~\cite{Par94, RSO94, SRK95, Pic00}.

Logic gates, switching functions or Boolean functions can be thought of as simple games, with weighted games playing the role of threshold functions. 
To the best of our knowledge the first work linking threshold logic and simple games is due to Dubey and Shapley~\cite{DuSh79} and a compact study 
encompassing knowledge in both fields is due to Taylor and Zwicker~\cite{TaZw99}. 

As an example for a switching function or a simple game one may consider the process of coordination of the weekend activities of a family. Assume that 
the family consists of the parents Ann and Bob and their children Claire and Dylan. A proposal is accepted if at least one of the parents and at least one 
of the children agrees, while each person can either agree or disagree. The underlying decision rule can be modeled as a simple game.\footnote{The minimal 
winning coalitions are given by $\{A,C\}$, $\{A,D\}$, $\{B,C\}$, and $\{B,D\}$, see Section~\ref{sec_terminology} for the definitions.} A compact way to 
represent a simple game is by using weights for each player such that a proposal is accepted if and only if the weight sum of its supporters meets or exceeds 
a given quota (or threshold). If such a representation exists, the simple game is called a weighted game. In our example no weighted representation exists, since 
the coalitions of the parents and of the children cannot push through a proposal, while they can if they split differently in coalitions of size 
two.\footnote{Using the notation from Section~\ref{sec_characterization}, $\langle \{A,C\},\{B,D\}\,\Vert\,\{A,B\},\{C,D\}\rangle$ is a trading transform, which certifies 
non-weightedness.} However, every simple game can be written as the intersection of some weighted games. The Lisbon voting rules of the EU Council provide 
a non-weighted real--world example where quite a few weighted games are need in such a representation, see \cite{KuNa16} for the details.

One of the most fundamental questions in all of the above mentioned areas is to 
characterize which monotonic switching functions (simple games) are weighted threshold functions (weighted games). In threshold logic this is known 
as the linear separability problem. This question has also been posed in other research fields by using different terminologies, which are essentially 
equivalent. Three different treatments to solve this problem have been considered.

The first consists in studying the consistency of a system of inequalities. Each inequality is formed by the inner product of two vectors: a non-negative 
integer vector of weights which represents the unknown variables and the vector formed by the subtraction of a true vector (winning coalition) minus a false 
vector (losing coalition). The system is formed by considering all possible subtractions of true and false vectors. If the switching function is a threshold 
function then each inequality must be positive and the system of inequalities is consistent. A theorem on the existence of solutions for systems of linear 
inequalities was given in~\cite{Chv83}. Linear programming is also a useful tool as shown in~\cite{KuTa13, FrKu14MSS}.

The second treatment, very close to the previous one, is a geometric approach based on the existence of a separating hyperplane that separates true vectors 
from false vectors. This procedure is elegant but not very efficient in practice. A use of the geometrical approach can be found in~\cite{EiLe89}. Reference \cite{HoZw14} 
proposes a variant of it.

The idea behind the third approach lies in the consideration of exchanges among vectors and the possibility to convert some true vectors into false vectors, 
no matter the number of vectors involved in these exchanges. The early works of~\cite{Elg60} and~\cite{Cho61}, reexamined for simple games in~\cite{TaZw92}, 
are the central point of this work. The class of threshold functions admits a structural characterization, the asummability property,  that is both natural 
and elegant. Some of the deepest results on this subject were obtained in the area of threshold logic during two decades from the fifties to the seventies by 
Chow, Elgot, Gabelman and Winder, as reported by Hu~\cite{Hu65}, and continued by Muroga~\cite{Mur71} and Muroga et al.~\cite{MTT61, MTK62, MTB70}.

The interpretation of Taylor and Zwicker for the asummability condition in terms of trades among coalitions in~\cite{TaZw92} together with the work in~\cite{TaZw95} 
stimulated the interest for the problem of characterizing weighted games within simple games. In their book~\cite{TaZw99} they adapted, for simple games, the most 
important results of threshold logic in relation to the linear separability problem. In particular, their property of trade-robustness is equivalent to the  
property of asummability. However, trade-robustness is more transparent in the theoretical context of voting since it gives rise to some intuitions concerning the 
idea of trading players among coalitions.
Freixas and Molinero~\cite{FrMo09DAM} propose a relaxation of trade-robustness for complete simple games and called it invariant-trade robustness, which is 
less costly in terms of the lenght of the corresponding certificates. 
In this paper we deduce some new results using this property.

As mentioned before, each simple can can be represented as an intersection of weighted games, which allows a compact representation once the number of required 
weighted games is small. In general this number, called dimension of the simple game, can be large, see e.g.~\cite{OlKuMo16}, where the worst case asymptotics has been 
determined via a connection to coding theory. If the dimension is small, e.g.~power indices can in general be more simply and efficiently computed. Here we treat 
the extreme case of dimension $1$, i.e., we consider the relavant issue whether a given simple game (switching function) is weighted and propose some new 
characterizations.

The organization of the paper is as follows.
The necessary basic terminology of simple games is reviewed in Section~\ref{sec_terminology}.
Section~\ref{sec_characterization} recalls the main general results on the characterization of weighted games within the class of simple games,
and it identifies the analogue terminologies used in threshold logic and simple games.
The problem of the characterization of weighted games can be restricted to the class of complete games, a parametrization result for classifying them, 
up to isomorphisms, which will be
intensively used in the next sections, is recalled in Section 4.
Sections 5, 6 and 7 provide new results on the characterization of weighted games. 
Cases for which $2$-invariant trade robustness is conclusive are presented in Section 5.  
$m$-invariant trade robustness is studied in Section 6, while Section 7 gives some numerical and experimental data. 
Several questions and conjectures are proposed for future research in Section 8.
In Section 9 we draw a conclusion. 

\section{Terminology in the context of voting simple games}
\label{sec_terminology}

Simple games or binary voting systems can be viewed as models of voting systems in which a
single alternative, such as a bill or an amendment, is pitted
against the status quo. A \emph{simple game} $G$ is a pair $(N, \mathcal W)$ in
which $N = \{1,2,\dots,n\}$ is the set of players or voters and $\mathcal W$ is a collection of
subsets of $N$ that satisfies: \emph{(1)} $N \in \mathcal W$,
\emph{(2)} $\emptyset \notin \mathcal W$ and \emph{(3)} the
\emph{monotonicity} property: $S \in \mathcal W$ and $S \subseteq T
\subseteq N$ implies $T \in \mathcal W$.

Any set of voters is called a \emph{coalition}, and the set $N$ is
called the \emph{grand coalition}. Members of $N$ are called
\emph{players} or \textit{voters}, and the subsets of $N$ that are
in $\mathcal W$ are called \emph{winning coalitions}.

The intuition here is that a set $S$ is a winning coalition \emph{if and
only if} the bill or amendment passes when the players in $S$ are
precisely the ones who voted for it. A subset of $N$ that is not in
$\mathcal W$ is called a \emph{losing coalition}. A
\emph{minimal winning coalition} is a winning coalition all of
whose proper subsets are losing. A
\emph{maximal losing coalition} is a losing coalition all of
whose proper supersets are winning. Because of monotonicity, any simple
game is completely determined by its set of minimal winning
coalitions, which is denoted by $\mathcal W^m$ or by its set of maximal losing coalitions,
which is denoted here
$\mathcal L^M$. A voter $a \in N$ is
\emph{null} if $a$ does not belong to any minimal winning
coalition. A player $a \in N$ has \emph{veto} if $a$ belongs to all winning coalitions.

Before proceeding, we present 
two real-world examples of simple
games (see Taylor and Pacelli~\cite{TaPa08} for a thorough presentation
of these two examples).

\begin{example} \label{EXAMPLE:SC} The United Nations Security Council. The
voters in this system are the fifteen countries that make up the
Security Council, five of which are permanent members whereas the
other ten are non-permanent members. Passage requires a total of at
least nine of the fifteen possible votes, subject to a veto due to a
no vote from any one of the five permanent members. This model
ignores abstention. For a treatment of this example considering the
possibility of abstention we refer the reader to \cite{FrZw03}.
\end{example}

\begin{example} \label{EXAMPLE:CC} The System to amend the Canadian Constitution.
Since 1982, an amendment to the Canadian Constitution can become law
only if it is approved by at least seven out of the ten Canadian
provinces, subject to the proviso that the approving provinces have,
among them, at least half of Canada's population. It was first
studied in Kilgour~\cite{Kil83}. An old census (in percentages) for the
Canadian provinces was: 1. Ontario $(34\%)$, 2. Quebec $(29\%)$, 3. British Columbia
$(9\%)$,  4. Alberta $(7\%)$, 5. Saskatchewan $(5\%)$, 6. Manitoba $(5\%)$,
7. Nova Scotia $(4\%)$, 8. New Brunswick $(3\%)$, 9. Newfoundland
$(3\%)$, 10. Prince Edward Island $(1\%)$.

For example observe that coalitions (from now on we use abridgments
to denote the provinces):
\newline
$X_1 = \{Que, BC, Alb, Sas, Man, NS, NB\}$ and
$X_2 = \{Ont, Sas, Man, NS, NB, Newf, PEI \}$ are minimal
winning coalitions because they both have exactly $7$ provinces and
their total population surpasses the $50\%$. Instead, coalitions:
\newline
$Y_1 = \{Ont, Que, Sas, Man, NS, NB \}$ and
$Y_2 = \{BC, Alb, Sas, Man, NS, NB, Newf, PEI \}$
are both losing because $Y_1$ does not
have $7$ or more members and $Y_2$ does not reach the $50\%$ of the
total Canada's population.
\end{example}

A fundamental subclass of simple games are the classes of weighted simple
games and complete simple games. 
A simple game $G=(N, \mathcal W)$ is said to be
\emph{weighted} if there exists a ``weight function" $w:N
\rightarrow \mathbb{R}_{\ge 0}$ and a ``quota" $q \in \mathbb{R}_{>0}$ such that
a coalition $S$ is winning precisely when the sum of the weights of
the players in $S$ meets or exceeds the quota. Any specific example of such a weight function $w: N \rightarrow
\mathbb{R}$ and quota $q$ are said to
\emph{realize} $G$ as a weighted game. A particular realization of
a weighted simple game is denoted as $[q;w_1, \dots, w_n]$.

For instance, $[k;\overbrace{1, \dots, 1}^n ]$ for some $k=1,
\dots,n$ is a feasible realization for a weighted game in which all players are symmetric;
here the
game is called a \emph{$k$--out--of--$n$ simple game}. A realization
of Example \ref{EXAMPLE:SC} is
$[39;7,7,7,7,7,1,1,1,1,1,1,1,1,1,1]$, where $7$ is the weight for a
permanent member and $1$ the weight for a non-permanent member. Instead, Example
\ref{EXAMPLE:CC} cannot be represented as a weighted game. Indeed, if the game was weighted we would have $w(X_1)>w(Y_1)$ and $w(X_2)>w(Y_2)$, i.e., 
$$
  w_2+(w_3+\dots+w_8) 
  \,>\, 
  (w_1+w_2)+(w_5+\dots+w_8) \quad\text{and}\quad 
  w_1+(w_5+\dots+w_{10}) 
  \,>\, 
  (w_3+\dots+w_{10}).
$$
After simplification we obtain $w_3+w_4>w_1$ and $w_1>w_3+w_4$, which 
is a contradiction. In these inequalities, $w_1$ represents, the weight for Ontario, the most populated province; $w_2$ represents, 
the weight for Quebec, the second most populated province; and so on.

It is quite intuitive to observe that a permanent
member has more influence than a non-permanent member in the voting systems described in Example~\ref{EXAMPLE:SC}.
The same occurs in Example~\ref{EXAMPLE:CC} where any of the two big provinces are more influential  
than any other of the remaining eight provinces. 
The ``desirability relation" represents a way to make the idea,
that a particular voting system may give to one voter more influence
than another, more precise. Isbell already used it in \cite{Isb58}.

Let $G=(N, \mathcal W)$
be a simple game, $a$ and $b$ be two voters. Player $a$ is said to
be \emph{at least as desirable as} $b$ as coalitional partner if for
every coalition $S$ such that $a \notin S$ and $b \notin
S$, $S \cup \{b\} \in \mathcal W$ implies $S \cup
\{a\} \in \mathcal W$. If moreover, there exists a coalition $T$ such that $a \notin T$ and $b
\notin T$, $T \cup \{a\} \in \mathcal W$ and $T \cup \{b\} \notin
\mathcal W$, then $a$ is (strictly) \emph{more desirable} than $b$.
Finally, $a$ and $b$ are said to be \emph{equally
desirable} if $a$ is at least as desirable as $b$ and the converse is also true.
The notations $a \succsim b$, $a \succ b$ and $a \sim b$ respectively stand for:
$a$ is at least as desirable as $b$, $a$ is strictly more desirable than $b$,
and $a$ and $b$ are equally desirable.  It is straightforward to check that $\sim$ is an
equivalence relation, and that the desirability relation $\succsim$ is a partial ordering of
the resulting equivalence classes.

A simple game $G=(N,\mathcal W)$
is \emph{complete (or linear)} if the desirability relation is a
complete preordering. Note that every weighted game is complete, since
for any realization it holds that $w_a \geq w_b$ implies $a \succsim b$.
But the converse is not true as Example~\ref{EXAMPLE:CC} shows.

In a complete simple game we may decompose $N$ into 
a collection of subsets, called classes, $N_1 > N_2
> \dots > N_t$ forming a partition of $N$.
Those classe should be the equivalence classes ordered by desirability, i.e., 
if $a \in N_p$ and $b \in N_q$ then: $p=q$ if and only if $a \sim b$
and, $p<q$ if and only if $a \succ b$.
Two parameters are of fundamental importance 
in our study.
One of them is the number of equivalence classes, $t$, in a complete game, i.e., 
a measurement of heterogeneity. 
In Example~\ref{EXAMPLE:SC} we have $N_1 > N_2$ where $N_1$ is formed by
the five permanent members and $N_2$ for the non-permanent ones, while in
Example \ref{EXAMPLE:CC} we have $N_1 > N_2$, where $N_1$ is formed by
the two big provinces and $N_2$ for the other eight provinces. Thus, in both examples $t=2$.

To define the second fundamental parameter for our study we need another definition. Given a simple game, a 
\emph{shift-minimal winning coalition} $S$ is a minimal winning coalition such that
$(S \setminus \{a\}) \cup \{b\}$ is losing whenever $a \succ b$ with $a \in S$ and $b \notin S$. Note that
a coalition of seven members in Example \ref{EXAMPLE:CC} containing both, Quebec and Ontario, is a minimal winning coalition 
but it is not shift-minimal winning, since a replacement of a big province in the coalition for a province not belonging 
to the coalition still leaves the new coalition winning. Analogously, a \emph{shift-maximal losing coalition} $S$ is a maximal losing coalition such that
$(S \setminus \{b\}) \cup \{a\}$ is winning whenever $a \succ b$ with $b \in S$ and $a \notin S$.

Two shift-minimal winning coalitions, $S$ and $T$, are said to be \emph{equivalent} coalitions
if $T$ can be obtained from $S$ by any sequence of one--to--one exchanges of equally desirable voters. If the game is 
complete we can consider the parameter $r$ which is the maximal number of non-equivalent shift-minimal winning coalitions.

Observe that the complete game from Example~\ref{EXAMPLE:SC} has ${10 \choose 4}=210$ minimal winning coalitions which are 
also shift-minimal winning coalitions, each consisting of all five permanent members and four arbitrary non-permanent members. 
However, all of them are equivalent in the previous sense. Thus, for Example~\ref{EXAMPLE:SC} the two  parameters we lay stress on are $r=1$ and $t=2$. 

The simple game from Example~\ref{EXAMPLE:CC} has $112$ minimal winning coalitions, $56$ of them formed by one of the two big provinces and 
six other provinces, which are also shift-minimal winning coalitions. The game has $56$ additional winning coalitions formed by the two big 
provinces and five other provinces, but these are not shift-minimal winning coalitions. The $56$ shift-minimal winning coalitions are equivalent 
among them in the previous sense. Thus, for Example~\ref{EXAMPLE:CC} the two parameters we lay stress on again are $r=1$ and $t=2$. 
The case $r=1$ is considered in Subsection 5.1.

\section{Some results on the characterization of weighted games}
\label{sec_characterization}

We introduce a notion of trades among coalitions, which is natural in game theory and in economic applications, see~\cite{TaZw99} for 
motivating examples. Suppose $G=(N, \mathcal W)$ is a simple game. Then a \emph{trading transform} is a coalition sequence 
$\langle X_1, \dots,X_k | Y_1, \dots,Y_k \rangle$ of even length satisfying the following condition: 
$$| \{ i : \, a \in X_i \}| = |\{ i: \, a \in Y_i \}| \quad \mbox{for all} \; a \in N.$$
The $X$s  are called the pre-trade coalitions and the $Y$s are called the post-trade coalitions.
A \emph{$k$-trade} for a simple game $G$ is a trading transform  $\langle X_1, \dots,X_j | Y_1, \dots,Y_j \rangle$ with $j \leq k$. 
The simple game $G$ is \emph{$k$-trade robust} if there is no trading transform for which all the $X$s are winning in $G$ and all the $Y$s 
are losing in $G$. If $G$ is $k$-trade robust for all $k\in\mathbb{N}$, then $G$ is said to be \emph{trade robust}.

Loosely speaking, $G$ is $k$-trade robust if a sequence of $k$ or fewer (not necessarily distinct) winning coalitions can never be rendered losing by a trade.

\begin{theorem} \label{T:ChowElgotTrade}
(Theorem 2.4.2 in~\cite{TaZw99}, see also~\cite{TaZw92}) Let $G=(N, \mathcal W)$ be a simple game. Then, $G$ is weighted if and only if 
$G$ is trade robust.
\end{theorem}
This result is equivalent to the one given by Elgot~\cite{Elg60} and Chow~\cite{Cho61} in threshold logic. Instead the notation of trade robustness, these authors 
used an equivalent condition of asummability of vectors. If we are restricted to complete simple games and only allow pre-trades of shift-minimal winning coalitions, 
then we may refer to the property of \emph{invariant-trade robustness} instead of trade robustness and Theorem~\ref{T:ChowElgotTrade} can be reformulated in an equivalent way.

\begin{theorem} \label{T:shiftTrade}
(Theorem 4.7 in~\cite{FrMo09DAM}) Let $G=(N, \mathcal W)$ be a complete simple game. Then, $G$ is weighted if and only if 
$G$ is invariant-trade robust.
\end{theorem}

As seen in Example~\ref{EXAMPLE:CC} the trading transform $ \langle X_1,X_2 | Y_1,Y_2 \rangle $ certifies a failure of $2$-invariant-trade robustness and 
therefore this complete simple game is not weighted. It is also trivial to see that the simple game described in Example~\ref{EXAMPLE:SC} 
is invariant-trade robust and therefore weighted.

Suppose $G=(N, \mathcal W)$ is a simple game. Then, 
$G$ is said to be \emph{swap robust} if a one--for--one exchange
between two winning coalitions can never render both losing.
Thus, swap robustness differs from trade robustness in two ways: the
trades involve only two coalitions, and the exchanges are
one--for--one. That is to say, swap robustness considers $m$-trades
of the following type: $m=2$ and $\langle X_1,X_2 | X_1 \setminus \{a\},X_2 \cup \{a\} \rangle$ with $a \in X_1$ and $a \notin X_2$.
It is fairly easy to generate simple games that are not swap robust. The following theorem is a characterization of complete simple games.

\begin{theorem} \label{THEOREM:Swap} (Proposition 3.2.6 in~\cite{TaZw99}) $G$ is a complete simple game \emph{if and only if}
$G$ is swap robust.
\end{theorem}

Clearly, non--complete games are not $2$-invariant trade robust. In fact, it is always possible to find two shift-minimal winning coalitions 
which convert into losing coalitions after a one--for--one exchange. Thus, the difficulty of the problem of determining when a given simple game 
is weighted can be focused exclusively on the class of complete games. To obtain significant results it is helpful to have a compact and manageable 
presentation of these games.

If, in a complete game, the number of the equivalence classes is lower than the number of players, i.e., $t<|N|$, we have such a presentation. 
Indeed, Carreras and Freixas \cite{CaFr96} provide a
classification theorem for complete simple games, here Theorem~\ref{theo:CF96},
that allows to enumerate all these games up to
isomorphism by listing the possible values of certain invariants. An
advantage of using the classification theorem is that it usually
allows to work with a smaller number of vectors than would be
required with minimal winning coalitions. 
In the next section we introduce a notion of trade-robustness based on these invariants.

As the basic game theoretic notions for simple games, we use in this paper, have already been introduced, we list a
list of language analogies between these notions to the fields of
threshold logic or Boolean algebra 
in next subsection.
These analogies allow the easy translation of the results from one field to the other.
In particular, this list will be useful for scholars in threshold logic to be aware of
the new results we find in this paper and the questions and conjectures we propose
to be studied.

\subsection{A list of analogies in the context of threshold logic}

For the sake of simplicity, clarity, and for being coherent with the historical 
studies we 
write the notions in the language of Boolean algebra (very similar to that of neural networks or threshold logic).
Tables~\ref{T:Table00} and~\ref{T:Table01} contain the main equivalences. The list is not exhaustive. Throughout the rest of the paper we exclusively deal with simple games and refer to these two tables for direct analogies of the results we find and the questions and conjectures we pose.

\begin{table}[!htp]
  \caption{Variables and vectors versus players and coalitions.} \label{T:Table00}
  \vskip 0.2truecm
   \centering
\begin{tabular}{r|l}
  variable or node& player or voter  \\
  irrelevant variable & null player \\
  essential variable & vetoer \\
  vector & coalition \\
  true vector &  winning coalition \\
  false vector & losing coalition \\
 minimal true vector &  minimal winning coalition \\
 maximal false vector& maximal losing coalition\\
 shift-minimal true vector &  shift-minimal winning coalition
\end{tabular}
\end{table}

\begin{table}[!htp]
  \caption{Types of functions versus types of simple games.}\label{T:Table01}
  \vskip 0.2truecm
   \centering
\begin{tabular}{r|l}
  switching function & non-monotonic (simple) game  \\
  monotonic switching function & (monotonic) simple game \\
  threshold function & weighted game \\
  k-out-of-n switching function & k-out-of-n simple game \\
  regular function & complete game \\
  k-summable &  not k-trade robust \\
  k-asummable & k-trade robust \\
  k-invariant summable &  not k-invariant trade robust \\
  k-invariant asummable & k-invariant trade robust
\end{tabular}
\end{table}


\section{Symmetries and a parametrization of complete simple games}

For $t<|N|$ types of voters we can represent coalitions in a more compact way. Let $(N,\mathcal{W})$ be a simple game and $N_1,\dots,N_t$ be a partition of the
  player set into $t$ equivalence classes of voters cf.~Section~\ref{sec_terminology}. A \emph{coalition type} (or \emph{coalition vector}) is
  a vector $\overline{s}=(s_1,\dots,s_t)\in (\mathbb{N} \cup \{0\})^t$ with $0\le s_i\le |N_i|$ for all $1\le i\le t$.
  We say that a coalition $S\subseteq N$ has type $\overline{s}$ if $s_i=|S\cap N_i|$ for all $1\le i\le t$.
  A coalition type $\overline{s}$ is called winning if the coalitions of that type are winning. Analogously, the notions
  of minimal winning, shift-minimal winning, losing, maximal losing and shift-maximal losing are translated similarly for coalitional types. So, the simple game from Example~\ref{EXAMPLE:SC} can be described by the unique minimal winning coalition type $(5,4)$ which represents all coalitions with $5$ permanent members and $4$ non-permanent members, and the simple game from Example~\ref{EXAMPLE:CC} can be described by the unique minimal winning coalition type $(1,6)$ which represents all coalitions with $1$ big province and $6$ small provinces.

The notion of a trading transform for coalitions can be transferred to coalitional types for vectors.

\subsection{Coalitional types}

Let $G=(N, \mathcal W)$ be a simple game and $N_1,\dots,N_t$
be a partition into $t$ equivalence classes of players. A
\emph{vectorial trading transform} for $G$ is a sequence
$ \langle \overline{x}_1, \dots, \overline{x}_j;
\overline{y}_1, \dots, \overline{y}_j\rangle$
of coalition types of even length such that
\begin{equation} \label{E:vect-trading}
  \sum_{i=1}^j x_{i,k} =
  \sum_{i=1}^j y_{i,k} \quad  \text{for all} \; \; 1\le k\le t.
\end{equation}

The definition of a vectorial trading transform means that for each component $1\le k\le t$,
the sum of the $k^{th}$ $\overline x$s components coincides with the
sum of the $k^{th}$ $\overline y$s components.

A \emph{vectorial $m$-trade} is a vectorial trading transform with $j\le m$ such that the
$\overline{x}_i$s are winning and after trades, as described in~\ref{E:vect-trading}, convert into $\overline{y}_i$s.

A given $m$-trade can easily be converted into a vectorial $m$-trade. The following lemma shows that the converse is also true, i.e., 
each given vectorial $m$-trade can be converted into an $m$-trade.

\begin{lemma}
  \label{LVTOC}
  For each pair of vectors $\overline{a}=(a_1,\dots,a_r)\in \mathbb{N}_{>0}^r$,
  $\overline{b}=(b_1,\dots,b_s)\in \mathbb{N}_{>0}^s$ with $\sum_{i=1}^r a_i=\sum_{i=1}^s b_i$
  and $m=\max\left(\max_i a_i,\max_i b_i\right)$ there exist two sequences of sets
  $A_1,\dots,A_r\subseteq \{1,\dots,m\}$ and $B_1,\dots,B_s\subseteq \{1,\dots,m\}$
  with $|A_i|=a_i$, $|B_i|=b_i$ and
  $$
    \left|\{i\,:\, j\in A_i\}\right|=\left|\{i\,:\, j\in B_i\}\right|
  $$
  for all $j\in\{1,\dots,m\}$.
\end{lemma}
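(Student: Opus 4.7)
The plan is to give a direct cyclic allocation construction on the label set $\{1,\dots,m\}$ and to verify the matching of column multiplicities by a one-line projection argument. Let $T=\sum_{i=1}^r a_i=\sum_{i=1}^s b_i$. Treating labels as elements of $\mathbb{Z}/m\mathbb{Z}$ with representatives in $\{1,\dots,m\}$, set $p_0=0$ and recursively define
\[
  A_i=\{p_{i-1}+1,\,p_{i-1}+2,\,\dots,\,p_{i-1}+a_i\},\qquad p_i\equiv p_{i-1}+a_i\pmod{m},
\]
and analogously $B_1,\dots,B_s$ starting from $q_0=0$ and using the step sizes $b_1,\dots,b_s$.

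First I would check that $|A_i|=a_i$ and $|B_i|=b_i$. This is precisely where the hypothesis $a_i,b_i\le m$ enters: because $a_i\le m$, the $a_i$ consecutive residues specifying $A_i$ are pairwise distinct modulo $m$, and similarly for $B_i$.

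Next I would compute the multiplicity $c_j:=|\{i:j\in A_i\}|$ for each $j\in\{1,\dots,m\}$. Lifting the walk to $\mathbb{Z}$, the concatenated blocks traverse each of the integers $1,2,\dots,T$ exactly once, and an integer $k$ projects to column $j$ iff $k\equiv j\pmod{m}$. Hence
\[
  c_j=\bigl|\{\,k\in\{1,\dots,T\}\,:\,k\equiv j\pmod{m}\,\}\bigr|,
\]
which is a quantity determined by $T$ and $m$ alone. Applying the identical argument to the $B$-sequence yields $|\{i:j\in B_i\}|=c_j$, which is exactly the equality required by the lemma.

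The real content is this last observation: the cyclic walk ``forgets'' the partition of $T$ into the $a_i$'s as far as column multiplicities are concerned, so blocks of total length $T$ built from either sequence must produce identical column sums. There is no genuine obstacle beyond careful indexing modulo $m$; in particular no appeal to Gale--Ryser, Hall's theorem, or induction on $r+s$ is needed, and the edge case $T=0$ (equivalently $r=s=0$) is vacuous.
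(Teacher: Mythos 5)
Your proof is correct, and it takes a genuinely different route from the paper's. The paper argues by induction on $\sigma=\sum_i a_i$ after sorting both vectors in decreasing order: if some $a_i=b_j$ it pairs them off as $\{1,\dots,a_i\}$ and recurses; otherwise it takes all $l$ entries equal to $m$ on the $a$-side, checks $s\ge l$, decrements the first $l$ entries on both sides (dropping the maximum to $m-1$), applies the induction hypothesis, and adjoins the element $m$ to the first $l$ sets on each side. Your cyclic (round-robin) allocation replaces this recursion and its case analysis with a single closed-form construction, and the verification collapses to the observation that the lift of the walk to $\mathbb{Z}$ visits $1,\dots,T$ exactly once, so the multiplicity of label $j$ is $\left|\{k\in\{1,\dots,T\}:k\equiv j\pmod{m}\}\right|$, a function of $T$ and $m$ only --- hence automatically equal on both sides. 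You also correctly isolate where the hypothesis $a_i,b_i\le m$ is used (distinctness of $a_i$ consecutive residues, so $|A_i|=a_i$), which is the only place the definition of $m$ matters. What each approach buys: yours is shorter, explicit, and trivially algorithmic in one pass; the paper's inductive version keeps the sets as initial segments or near-initial segments and matches equal entries identically, which is marginally more structured but irrelevant for the lemma's use (converting vectorial $m$-trades into $m$-trades, applied separately within each equivalence class). Both are complete proofs; no gap in yours.
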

\begin{proof}
  W.l.o.g.\ we assume $a_1\ge \dots\ge a_r$ and $b_1\ge \dots \ge b_s$. We prove the
  statement by induction on $\sigma=\sum_{i=1}^r a_i$. For $\sigma=1$ we have $r=s=a_1=b_1=m=1$
  and can choose $A_1=B_1=\{1\}$. We remark that the statement is also true for $\sigma=0$, i.e.,
  where $r=s=0$.

  If there exist indices $i,j$ with $a_i=b_j$, then we can choose $A_i=\{1,\dots,a_i\}$,
  $B_j=\{1,\dots,b_j=a_i\}$ and apply the induction hypothesis on $(a_1,\dots,a_{i-1},a_{i+1},$ $\dots,a_r)$
  and $(b_1,\dots,b_{j-1},b_{j+1},\dots,b_s)$.

  In the remaining cases we assume w.l.o.g.\ $a_1=m$ and $b_1<m$. Now let $l$ be the maximal index with $a_l=m$.
  Since $\sum_{i=1}^r a_i=\sum_{i=1}^s b_i$ we have $s\ge l$. So, we can consider the reduction to
  $(a_1-1,\dots,a_l-1,a_{l+1},\dots,a_r)$ and $(b_1-1,\dots,b_l-1,b_{l+1},\dots,b_s)$, where we possibly have to
  remove some zero entries and the maximum entry decreases to $m-1$. Let $A'_1,\dots,A'_r,B'_1,\dots,B'_s
  \subseteq \{1,\dots,m-1\}$ be suitable coalitions (allowing $A'_i=\emptyset$ or $B'_i=\emptyset$ for the ease
  of notation). Adding player~$m$ to the first $l$ coalitions in both cases yields the desired sequences of
  coalitions.
\end{proof}

The construction in Lemma~\ref{LVTOC} for each equivalence class of voters separately converts
a vectorial $m$-trade into an $m$-trade. Also for vectorial $m$-trades we may assume that the winning
coalition types are minimal winning or that the losing coalition types are maximal losing. Since the number
of coalition types is at most as large as the number of coalitions we can computationally benefit from
considering vectorial $m$-trades if the number of types of voters is less than the number of voters.

\subsection{A parametrization of complete simple games}

In a complete simple game $G=(N,\mathcal{W})$ we have a strict ordering between two voters of
different equivalence classes. This ordering entails a hierarchy among voters. Some studies on allowable hierarchies can be found in~\cite{BFP08, FrPo10TD,FMP06}.  
As before, we denote by $N_1 > \dots  > N_t$ the equivalence classes which form the unique partition of $N$ where $a \succ b$ for all $a \in N_i$ and $b \in N_j$ 
with $i<j$.  Let $\overline{n} = (n_1, \dots,n_t)$ where $n_i = |N_i|$ for all $i=1, \dots,t$. Consider
\[
\Lambda (\overline{n})=\{\overline{s}\in (\Bbb{N}\cup \{0\})^{t}:\overline{n}%
\geq \overline{s}\},
\]
where $\geq $ stands for the ordinary componentwise ordering, that is,
$\overline{a}\geq \overline{b}\mbox{ }\mbox{ if and only if }\mbox{ }%
a_{k}\geq b_{k}\mbox{
for every }k=1,...,t.
$
and also consider the weaker ordering $\succeq$ given by comparison
of partial sums, that is,
\[
\overline{a}\,\succeq \,\overline{b}\;\mbox{if and only if}\mbox \;
\sum_{i=1}^k a_i \geq \sum_{i=1}^k b_i  \; \mbox{for} \; k=1,..,t.
\]
If $\overline{a}\,  \succeq \,\overline{b}$ we say that $\overline{a}$
\textit{dominates} $\overline{b}.$

The couple $(\Lambda (\overline{n}),\succeq )$ is a distributive lattice and
possesses a maximum (respectively, minimum) element, namely $\overline{n}%
=(n_{1},...,n_{t})$ (resp. $\overline{0}=(0,...,0)$).
  As abbreviations we use $\overline{a}\succ \overline{b}$ for the cases where $\overline{a}\succeq \overline{b}$
  but $\overline{a}\neq \overline{b}$ and $\overline{a}\bowtie\overline{b}$ for the cases where neither
  $\overline{a}\succeq \overline{b}$ nor $\overline{b}\succeq \overline{a}$.

The interpretation of $\overline{a}\succeq \overline{b}$ is as follows. If $\overline{b}$ is a winning coalitional vector and
$\overline{a}\succeq \overline{b}$, then also $\overline{a}$ is winning. Similarly, if $\overline{a}$ is
losing then $\overline{b}$ is losing too for all $\overline{a}\succeq \overline{b}$.

A winning coalitional vector
$\overline{a}$ such that $\overline{b}$ is losing for all $\overline{a}\succ \overline{b}$ is called
shift-minimal winning. Similarly, a losing coalition type $\overline{b}$ such that $\overline{a}$ is winning
for all $\overline{a}\succ \overline{b}$ are called shift-maximal losing. Each complete simple game can be uniquely
described by either its set of shift-minimal winning coalition types or its set of shift-maximal losing
coalition types.

Based on this insight, Carreras and Freixas (\cite{CaFr96}  pp. 148-150) provided a classification
theorem for complete simple games that allow to enumerate all these
games up to isomorphism by listing the possible values of certain
invariants. Indeed, to each complete simple game $(N,\mathcal{W})$ one can associate the vector $\overline{n} \in \Bbb{N}^t$ as defined above and the list of shift-minimal winning coalitional vectors:
$\overline{m}_p = (m_{p,1},m_{p,2}, \dots, m_{p,t})$ for $1 \leq p \leq r$.

Recall that two simple games $(N,\mathcal{W})$ and $(N^{\prime },\mathcal{W}^{\prime })$
are said to be \emph{isomorphic} if there exists a bijective map $%
f:N\rightarrow N^{\prime }$ such that $S\in \mathcal{W}$ if only if $f(S)\in
\mathcal{W}^{\prime }.$

\begin{theorem} \label{theo:CF96} (Theorem 4.1 in~\cite{CaFr96}) (a) Given a vector $\overline{n} \in \Bbb{N}^t$ and a
matrix $\mathcal{M}$ whose rows $\overline{m}_p = (m_{p,1},m_{p,2},
\dots, m_{p,t})$ for $1 \leq p \leq r$ satisfy the following
properties:

\noindent (i) $0 \leq \overline{m}_p \leq \overline{n}$ for $1 \leq p
\leq r$;
\newline
(ii) $\overline{m}_p$ and $\overline{m}_q$ are not
$\succeq$--comparable if $p \neq q$; i.e., $ \overline{m}_p \bowtie \overline{m}_q$
\newline (iii) if $t=1$, then
$m_{1,1}>0$; if $t>1$, then for every $k<t$ there exists some $p$
such that $$ m_{p,k}>0, \; m_{p,k+1}<n_{k+1};$$ and \newline (iv)
$\mathcal M$ is lexicographically ordered by partial sums, if
$p<q$ either $m_{p,1}>m_{q,1}$ or there exists some $k \geq 1$ such that $m_{p,k} >  m_{q,k}$ and
$ m_{p,i} =  m_{q,i}$ for $h<k$.

Then, there exists a complete simple
game $(N, \mathcal W)$ associated to $(\overline{n},
\mathcal{M})$.

\noindent  (Theorem 4.2 in~\cite{CaFr96}) (b) Two complete games $(N, \mathcal{W})$ and
$(N',\mathcal{W}')$ are isomorphic if {\it and only if} $\overline n
= \overline {n'}$ and $\mathcal M = \mathcal M '$.
\end{theorem}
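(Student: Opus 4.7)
The plan for part~(a) is to construct a canonical candidate game from the data $(\overline{n},\mathcal{M})$ and then check that the four hypotheses make it correct. Take disjoint sets $N_1,\dots,N_t$ with $|N_k|=n_k$, let $N=N_1\sqcup\cdots\sqcup N_t$, and declare a coalition $S\subseteq N$ winning iff its type $\overline{s}\in\Lambda(\overline{n})$ satisfies $\overline{s}\succeq \overline{m}_p$ for some row $\overline{m}_p$; condition~(i) makes this well-posed. Adding a player can only increase components and hence partial sums, so the game is monotonic. For $i\in N_k$ and $j\in N_\ell$ with $k\le\ell$, replacing $j$ by $i$ in any coalition changes the type by $+1$ in coordinate~$k$ and $-1$ in coordinate~$\ell$, which never decreases any partial sum, so $i\succsim j$. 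Thus the desirability preorder is total and the game is complete.

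I would then pin down the shift-minimal winning coalitional vectors. Each $\overline{m}_p$ is winning by construction; if some winning $\overline{s}\prec \overline{m}_p$ existed, then $\overline{s}\succeq \overline{m}_q$ for some $q\neq p$, forcing the chain $\overline{m}_p\succ \overline{s}\succeq \overline{m}_q$ and contradicting the $\succeq$-incomparability imposed by~(ii); conversely, any shift-minimal winning type dominates some $\overline{m}_p$ and minimality forces equality. Condition~(iii) is then what upgrades $N_k\succsim N_{k+1}$ to the strict relation $N_k\succ N_{k+1}$: picking a row with $m_{p,k}>0$ and $m_{p,k+1}<n_{k+1}$, one may choose a winning coalition of type $\overline{m}_p$ containing a prescribed $i\in N_k$ and missing a prescribed $j\in N_{k+1}$; the type obtained from $\overline{m}_p$ by decreasing coordinate~$k$ by one and increasing coordinate~$k{+}1$ by one is strictly $\succ$-below $\overline{m}_p$, and by the same transitivity argument as above cannot dominate any $\overline{m}_q$. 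Hence the swapped coalition is losing, showing that $j$ does not dominate $i$, so $N_k\succ N_{k+1}$. The game therefore has equivalence classes $N_1>\dots>N_t$ with sizes $\overline{n}$ and shift-minimal matrix~$\mathcal{M}$.

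For part~(b), the ``if'' direction is handled by any bijection $f\colon N\to N'$ sending $N_k$ onto $N_k'$, since such an $f$ preserves coalitional types and hence winning. For the converse, any isomorphism preserves $\succsim$ and therefore sends equivalence classes to equivalence classes, respecting the strict chain $N_1\succ\dots\succ N_t$; this already yields $\overline{n}=\overline{n}'$ and identifies the shift-minimal winning coalitional vectors of the two games. Consequently $\mathcal{M}$ and $\mathcal{M}'$ have the same set of rows, and condition~(iv) fixes them in the same lexicographic-by-partial-sums order, forcing $\mathcal{M}=\mathcal{M}'$.

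The main difficulty I foresee is the strict-dominance step in part~(a): verifying that the swap inside a judiciously chosen coalition actually produces a losing coalition. The slick point is that the swap changes exactly one partial sum and decreases it by one, so the new type is strictly $\succ$-below $\overline{m}_p$; any hypothetical dominance over another row $\overline{m}_q$ transports via transitivity to $\overline{m}_p\succ \overline{m}_q$, contradicting~(ii). This is precisely the spot where all four conditions interact: (i) makes the setup legal, (iii) guarantees a row where the swap can be performed at all, (ii) blocks the swapped type from remaining winning, and (iv) is invisible in~(a) but becomes decisive in the uniqueness statement of part~(b).
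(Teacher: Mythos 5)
Your proposal is correct, but note that the paper itself offers no proof of this theorem: it is quoted from Carreras and Freixas~\cite{CaFr96}, and the paper only sketches, immediately after the statement, the same back-and-forth construction between $(\overline{n},\mathcal{M})$ and $(N,\mathcal{W})$ that you carry out. Your verification --- monotonicity and completeness of the constructed game, identification of the shift-minimal winning types via the incomparability condition (ii), strictness of the dominance chain via condition (iii), and the role of (iv) only in the uniqueness part (b) --- matches that sketch and is sound.
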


The pair $(\overline n, \mathcal M)$ is referred as the {\it
characteristic invariants} of game $(N, \mathcal W)$. The authors
prove that these parameters determine the game in the sense that one
is able to define a unique up to isomorphism complete simple game
which possesses these invariants. The characteristic invariants
allow us to count and generate all these games for small values of
$n$. Other applications of the characteristic invariants are to
considerably reduce the calculus of some solutions, as values or
power indices, of the game (see e.g., \cite{FrPu98} for the nucleolus~\cite{Sch69}) or to
study whether a game admits a representation as a weighted
game by studying the consistency of a system of inequalities as we will see below.

If matrix {$\mathcal M$} has only one row, i.e. a unique shift-minimal coalitional vector, then the characteristic invariants reduce to the couple $(\overline n, \overline m)$ with
$$
\begin{array}{ll}
  &1\leq m_{1}\leq n_{1}\\
  &1\leq m_{k}\leq n_{k}-1 \quad \mbox{if} \; 2\leq k\leq t-1, \\
  &0\leq m_{t}\leq \, n_{t}-1,
\end{array}
$$
where the first subindex in matrix $\mathcal M$ is omitted.
It is said, see~\cite{FrPu98}, that $(\overline n, \overline m)$ is a
\emph{complete game with minimum}.

We sketch here how to obtain the characteristic invariants
$(\overline{n}, \mathcal M)$ for the complete game
from winning coalitions and reciprocally.

Given a simple game $(N, \mathcal W)$, for each coalition $S$ we
consider \emph{the vector or coalitional type}
\[
\overline{s}=(\left| S\cap N_{1} \right|
,...,\left| S\cap N_{t}\right| ),
\]
in $\Lambda (\overline{n})$ where $N_{i}$ are the equivalence classes
with $N_{1}>...>N_{t}.$ The vector $\overline n$ is
$(\left|N_{1}\right|  ,...,\left| N_{t}
\right| ).$
The rows of matrix $\mathcal M$ are those $\overline
s$ such that any $S$ is a shift-minimal winning coalition
in the lattice $(\Lambda (\overline{n}),\succeq ).$ Observe that
each vector of indices that $\succeq$--dominates a row of $\mathcal M$
corresponds to winning coalitions.

Conversely, given $(\overline{n},\mathcal M)$ the game
$(N,\mathcal W )$ can be reconstructed, up to isomorphism, as
follows. The cardinality of $N$ is $n=\sum_{i=1}^t n_i$,
the elements of $N$ are denoted by $\{1,2,\dots,n\}.$ The
equivalent classes of $(N, \mathcal W)$ are $N_{1}=\{1,\dots,n_1\},$
$N_2=\{n_1+1,\dots,n_1+n_2\},$ and so on.

Each $S\subseteq N$ with vector $\overline{s}
=(\left| S\cap N_{1}\right|  ,...,\left|
S\cap N_{t}\right| )$ is a winning coalition if
$\overline{s}\, \succeq \,\overline{m}$ for some $\overline{m}$ being a row of $\mathcal M$. Hence, the set of winning coalitions is
\[
\mathcal{W}=\{S\subseteq N \,:  \, \,\overline{s}\,\succeq
\,\overline{m}_p, \; \, \mbox{where} \; \overline{m}_p \; \mbox{is a
row of} \; \mathcal{M}\}.
\]
Notice that a \emph{winning vector} is a vector $\overline{r}$ such that the
coalition representative $R$ is winning. In particular, the
shift-minimal winning coalitions are those with a vector being a row of $\mathcal M$. Precisely,
\[
\mathcal{W}^{s}=\{S\subseteq
N:\overline{s}\,=\,\overline{m}_p \ \text{for some} \ p=1, \dots r\}.
\]
Analogously, one can define the coalitional types of shift--maximal losing coalitions which can be written as rows in a matrix $\mathcal Y$ lexicographically ordered, as requested also for $\mathcal M$, to preserve uniqueness. These coalitional types are the maximal vectors which are not $\succeq$-comparable among them and do not dominate by $\succeq$ any row of $\mathcal M$.

Some particular forms of the pair $(\overline n, \mathcal M)$
reveal the presence of players being either vetoers or nulls. For
instance, if $m_{p,t}=0$ for all $p=1, \dots,r$ the game has $n_{t}$ null players. If
$m_{p,1}=n_1$ for all $p=1, \dots,r$ the game has $n_{1}$ vetoers.

Using the well known fact that any weighted game
admits\textit{\ normalized} representations, where $i \sim j$ if and only
if $w_{i}=w_{j}$, we will consider from now on,
$w=(w_{1},...,w_{t}),$ the vector of weights to be assigned to
the members of each of the $t$ equivalence classes. Using normalized representations a weighted game may be expressed as $[q;w_1(n_1),\dots,w_t(n_t)]$ in
which repetition of weights is indicated within parentheses and $q$ stands for the quota or threshold.
However, these parentheses will be omitted provided that
$\overline{n}=(n_1, \dots,n_t)$ is a known vector. A complete
simple game, $(N,\mathcal{W})$, is weighted \emph{if and only
if} there is a vector
$w=(w_{1},...,w_{t}),$ such that $%
w_{1}>...>w_{t}\geq 0$, which satisfies the system of
inequalities
\[
(\overline{m}_p-\overline{\alpha}_q )\cdot w>0 \ \mbox{ for all
} \ p=1,2,...,r, \ \ q=1, \dots, s
\]
where $r$ is the number of rows of $\mathcal M$, $s$ the number of rows of $\mathcal Y$, and
$\overline{\alpha}_q$ are the rows of $\mathcal Y$.

Only for $n\geq 6$ there are complete simple games which are not weighted.
The following example is the smallest possible illustration
of a complete simple game with minimum, i.e., with one shift-minimal winning vector, that is not a weighted game. It helps us to understand better this kind of
games, which are extensively used in the next section.

\begin{example}
\begin{enumerate}
\item (Example~\ref{EXAMPLE:SC} revisited)  The characteristic invariants for this example are: $\overline n = (5,10)$ and $\mathcal M = ( 5 \ 4)$. Thus,
    $$ \mathcal W = \{ (5,x)  \in \Lambda(5,10)\, : \, x \geq 4 \} $$
    $$ \mathcal W^m = \mathcal W^s = \{ (5,4) \} $$
    Note also that $\mathcal Y =
    \left(
      \begin{array}{cc}
        5 & 3 \\
        4 & 10 \\
      \end{array}
    \right)
    $
    whose rows are the shift-maximal coalitional types. As shown, this game is weighted. In the next section we will show that to prove this it suffices to verify $k$-invariant trade robustness, where $k$ is $2$.
\item (Example~\ref{EXAMPLE:CC} revisited) The characteristic invariants for this example are: $\overline n = (2,8)$ and $\mathcal M = ( 1 \ 6)$. Thus,
    $$ \mathcal W = \{ (x,y) \in \Lambda (2,8) \, : \, x \geq 1 \ \text{and} \ x+y \geq 7 \} $$
    $$ \mathcal W^m = \{ (2,5), \, (1,6) \} $$
    $$ \mathcal W^s = \{ (1,6) \} $$
    Note also that $\mathcal Y =
    \left(
      \begin{array}{cc}
        2 & 4 \\
        0 & 8 \\
      \end{array}
    \right)
    $
    whose rows are the shift-maximal coalitional types.

Note that Example~\ref{EXAMPLE:CC} is not 2-invariant trade robust since the coalitional type trading transform:
\newline
$<(1,6),(1,6) | (2,4),(0,8) >$ is a certificate for it. Hence, the game is not weighted.
\end{enumerate}
\end{example}

\subsection{Two parameters for complete simple games}

Two parameters for a complete simple game are significant for our studies: $r$ the number of rows of $\mathcal M$ or number 
of shift-minimal coalitional vectors and $t$ the number of equivalence classes of players in the game. The conditions that $\mathcal M$ 
must fulfill are described in Theorem~\ref{theo:CF96}. The question we pose here is the following: Are there some values for $r$ and $t$ 
for which $2$-invariant trade robustness is conclusive? The purpose of Section~\ref{sec_main} is to prove that the posed question has an 
affirmative answer for either $r=1$ (no matter the value of $t$) or $t=2$ (no matter the value of $r$), while in Section~\ref{sec_invariant_further} 
we investigate the remaining cases.

Let us remark that the number of complete and weighted games as a function of $|N|$ up to isomorphisms has been determined for these two parameters. 
We use below the notations $cg(n,\star,r)$, $cg(n,t,\star)$,  $wg(n,\star,r)$, and $wg(n,t,\star)$ depending on whether we consider complete or weighted 
games or parameter $r$ or parameter $t$. The first (trivial) exact counting establishes the number of $k$--out--of--$n$ simple games. Each of such games admits 
$[k;\underbrace{1,1,\dots,1}_n]$ as a weighted representation where $k \in \{1, \dots, n \}$. As $t=1$ implies $r=1$ we have
$cg(n,1,\star)=wg(n,1,\star)=n$.

For $r=1$, we have $cg(n,\star,1)=2^n-1$ (see~\cite{FrPu08}) complete simple games with minimum with $n$ players up to isomorphism and the number of 
weighted games with minimum, $wg(n,\star,1)$, is given by
$$  wg(n,\star,1) = \left\{
                        \begin{array}{ll}
                          2^n-1, & \hbox{if} \; n \leq 5 \\
                          \dfrac{n^4-6n^3+23n^2-18n+12}{12}, & \hbox{if} \; n \geq 6 \\
                        \end{array}
                      \right.
$$
cf.~\cite{FrKu14ANOR}.

For $t=2$ we have the nice formula $cg(n,2,\star) = F(n+6)-(n^2 + 4n +8)$ (cf.~\cite{FMR12}) where $F(n)$ are the Fibonacci numbers which constitute a well--known sequence of integer
numbers defined by the following recurrence relation:
$F(0)=0$, $F(1)=1$, and $F(n)=F(n-1)+F(n-2)$ for all $n>1$. Quite curiously the addition of trivial voters, as null voters or vetoers, in complete games with two equivalence classes formed by non-trivial voters give new larger Fibonacci sequences (cf.~\cite{FrKu13EJOR}). Up to now there is not a known formula for
$wg(n,2,\star)$ although it has been proved in~\cite{FrKu14MSS} that $wg(n,2,\star) \leq \frac{n^5}{15} + 4n^4$.

Concerning general enumeration for simple, complete and weighted games it should be said that in the successive works by Muroga et al.~\cite{MTT61, MTK62, MTB70} the number of such games was determined up to eight voters. Only the numbers of complete and weighted games for
$n=9$ voters have been determined since then, cf.~\cite{FrMo10OMS} for the number of complete games for $n=9$ and cf.~\cite{KaKuRiSch15,Ku12} for the number of weighted games for $n=9$.
An asymptotic upper bound for weighted games is given in~\cite{KKZ14} and an asymptotic lower bound for complete games in~\cite{PeSi85}.

\section{Cases for which the test of $\mathbf{2}$-invariant trade robustness is conclusive}
\label{sec_main}

Note first that each simple game with a unique equivalence class of voters, $t=1$, is anonymous
(symmetric), and thus weighted. Non-complete games are not swap robust and therefore they are obviously
not weighted. Hence, we can limit our study to complete simple games.

Prior to study them let us consider the null effect on invariant trade robustness of removing either null or veto players in a given complete simple game. Since adding and removing null players does not change a coalition from winning to losing or the other way round, we can state:

\begin{lemma}
  \label{Lreduction1}
  Let $G$ be a complete simple game and $G'$ be the game arising from $G$ by removing its null players.
  With this we have that $G$ is $m$-invariant trade robust if and only if $G'$ is $m$-invariant trade robust.
\end{lemma}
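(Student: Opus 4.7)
The proof strategy is to transfer invariant trading transforms in both directions between $G$ and $G'$, using the following two preparatory observations.

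\emph{Observation 1: shift-minimal winning coalitions of $G$ contain no null players.} In a complete simple game, null players, if any, form the unique lowest equivalence class $N_t$. Suppose $\overline{m}=(m_1,\dots,m_t)$ is a shift-minimal winning coalition type with $m_t\ge 1$. Let $\overline{m}'=(m_1,\dots,m_{t-1},m_t-1)$. Comparing partial sums one sees $\overline{m}\succ\overline{m}'$, so by shift-minimality $\overline{m}'$ is losing; but removing a null player cannot change the winning status, giving a contradiction. Hence $m_t=0$.

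\emph{Observation 2: shift-minimal winning coalition types match up.} If $\overline{m}=(m_1,\dots,m_{t-1},0)$ is shift-minimal winning in $G$, then $\overline{m}^\circ=(m_1,\dots,m_{t-1})$ is shift-minimal winning in $G'$, and vice versa. The key step is a short partial-sum calculation: any $\overline{b}^\circ\prec \overline{m}^\circ$ in $G'$ extends to $(\overline{b}^\circ,0)\prec \overline{m}$ in $G$ (same partial sums, strict where needed), and conversely any $\overline{b}=(b_1,\dots,b_t)\prec \overline{m}$ in $G$ restricts to $\overline{b}^\circ=(b_1,\dots,b_{t-1})\prec\overline{m}^\circ$ in $G'$; the winning status is preserved in each case because adding or deleting null players does not change it.

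Now for the lemma itself, I would argue both directions through trading transforms. For the direction $(\Leftarrow)$, take any failure of $m$-invariant trade robustness of $G'$, i.e.\ a trading transform $\langle X_1,\dots,X_j,Y_1,\dots,Y_j\rangle$ with $j\le m$, shift-minimal winning $X_i$ and losing $Y_i$ in $G'$. Viewing each $X_i,Y_i$ as subsets of $N$, the $X_i$ remain shift-minimal winning in $G$ by Observation 2, the $Y_i$ remain losing (nulls do not affect status), and the trading transform condition $|\{i:a\in X_i\}|=|\{i:a\in Y_i\}|$ holds trivially for $a\in N\setminus N'$ (both sides are $0$) and by assumption for $a\in N'$. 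Hence $G$ is not $m$-invariant trade robust.

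For $(\Rightarrow)$, start from a witness $\langle X_1,\dots,X_j,Y_1,\dots,Y_j\rangle$ of non-$m$-invariant-trade-robustness of $G$. By Observation 1, no $X_i$ contains a null player; then, for every null player $a$, the trading transform identity forces $|\{i:a\in Y_i\}|=0$, so the $Y_i$ also avoid null players. Therefore the whole transform lies inside $N'$, and by Observation 2 the $X_i$ are shift-minimal winning in $G'$ while the $Y_i$ remain losing there, certifying that $G'$ is not $m$-invariant trade robust. The main subtlety is precisely this last point: without the fact that shift-minimal winning coalitions are null-player-free, the $Y_i$ in the $G$-side transform could contain null players and would have to be truncated, which would in general destroy the trading transform equation.
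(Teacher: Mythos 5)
Your proof is correct. The paper offers no real proof of this lemma --- it is stated as an immediate consequence of the one-sentence remark that adding or removing null players does not change a coalition from winning to losing --- so your write-up supplies details the authors treat as obvious. In doing so you correctly isolate the point that actually needs an argument for the \emph{invariant} version of trade robustness: the pre-trade coalitions must be shift-minimal winning, so one has to check both that shift-minimal winning coalitions of $G$ contain no null players (your Observation 1, which then forces the losing coalitions of the transform to avoid nulls as well, via the trading-transform identity) and that shift-minimality is preserved under the correspondence between coalition types of $G$ and $G'$ (your Observation 2). The remaining transfer of trading transforms in both directions is exactly what the paper's remark alludes to. The only cosmetic issue is that your labels $(\Rightarrow)$ and $(\Leftarrow)$ are swapped relative to the statement as written, but since you prove both implications this is immaterial.
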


And a similar result, not as immediate, concerns veto players.

\begin{lemma}
  \label{Lreduction2}
  Let $G$ be a complete simple game and $G'$ be the game arising from $G$ by removing its veto players.
  If $G'$ is a simple game, then $G$ is $m$-invariant trade robust if and only if $G'$ is $m$-invariant trade robust.
\end{lemma}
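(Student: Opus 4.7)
The plan is to set up a natural bijection between $m$-invariant trading transforms that witness failure in $G$ and those that witness failure in $G'$. Let $V$ denote the set of veto players of $G$, so that $N' := N \setminus V$ is the player set of $G'$ and, for every $S \supseteq V$, $S$ is winning in $G$ iff $S \setminus V$ is winning in $G'$. Because veto players form the topmost equivalence class in the dominance preorder of $G$, I expect the map $S \mapsto S \setminus V$ to transport both shift-minimality of winning coalitions and the trading-transform identity, and its inverse $S' \mapsto S' \cup V$ to do the reverse.

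The first key step is the claim that $X$ is shift-minimal winning in $G$ iff $X \supseteq V$ and $X \setminus V$ is shift-minimal winning in $G'$. Any winning coalition type in $G$ has the form $\overline{x} = (n_1, x_2, \dots, x_t)$, and shift-minimality asks that every strict $\succeq$-predecessor be losing. A predecessor $\overline{y}$ with $y_1 < n_1$ is automatically losing in $G$ since it omits a veto player, so only predecessors with $y_1 = n_1$ impose real constraints. Those correspond bijectively, under the partial-sum order on the last $t-1$ components, to strict $\succeq$-predecessors of $(x_2, \dots, x_t)$ in $G'$, and the respective losing statuses in $G$ and $G'$ agree. Hence the two notions of shift-minimality match.

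The second, shorter step is the crucial use of the veto property: in any trading transform $\langle X_1, \dots, X_k; Y_1, \dots, Y_k\rangle$ for $G$ in which every $X_i$ is winning, each $v \in V$ lies in all $k$ of the pre-trade coalitions, so the defining identity $\sum_i \mathbf{1}_{X_i} = \sum_i \mathbf{1}_{Y_i}$ forces $v$ to lie in every $Y_i$ as well. Thus every coalition occurring in the transform contains $V$, and removing $V$ uniformly preserves the trading-transform equation component-wise on $N'$.

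Combining the two steps in both directions gives the lemma. Starting from a certificate of failure of $m$-invariant trade robustness in $G$, one strips $V$ from each $X_i$ and $Y_i$: the $X_i \setminus V$ are shift-minimal winning in $G'$ by step~1, the $Y_i \setminus V$ are losing in $G'$ because $(Y_i \setminus V) \cup V = Y_i$ is losing in $G$, and step~2 ensures we have a trading transform of length $k \le m$. Conversely, adjoining $V$ to every coalition in a certificate for $G'$ reverses the construction. I expect the main obstacle to be step~1, where one must carefully unpack the partial-sum ordering $\succeq$ and verify that predecessors with $y_1 = n_1$ are exactly the ones needed; the hypothesis that $G'$ is a simple game enters here to guarantee that $V$ alone is losing in $G$ (equivalently, $\emptyset$ is losing in $G'$), which keeps the bijection consistent in the degenerate case where some $X_i' = \emptyset$ or $Y_i' = \emptyset$.
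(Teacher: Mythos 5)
Your proposal is correct and follows essentially the same route as the paper: both arguments rest on the observation that every veto player must appear in all pre-trade coalitions and hence, by the trading-transform counting identity, in all post-trade coalitions, so that adjoining or stripping $V$ carries certificates between $G$ and $G'$. Your step~1, explicitly matching shift-minimal winning coalition types of $G$ with those of $G'$ via the partial-sum order on the last $t-1$ components, is a detail the paper's proof leaves implicit, and it is a welcome addition since the lemma concerns \emph{invariant} trade robustness.
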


\begin{proof}
If veto players are present, then each winning coalition of a simple game must contain all veto players. So, in any $m$-trade every
involved losing coalition must also contain all veto players. Let $G=(N,\mathcal{W})$ be a simple game, where
$\emptyset\neq V\subseteq N$ is the set of veto players. If $V=N$ the game $G$ is the unanimity game and therefore weighted. Otherwise we can consider
$G'=(N',\mathcal{W}')$, where $N'=N\backslash V$ and $N'\supseteq S\in \mathcal{W}'$ if and only if $S\cup V\in\mathcal{W}$.
If $\emptyset \in \mathcal{W}'$, then the players in $N\backslash V$ are nulls in $G'$ and the game is indeed weighted. Otherwise
$G'$ is a simple game too. If $G$ is complete, then $G'$ is complete too, see e.g.\ \cite{FrKu13EJOR}. Given an $m$-trade
for $G'$, we can obtain an $m$-trade for $G$ by adding $V$ to all coalitions. For the other direction removing all veto players turns an $m$-trade for $G$ into an $m$-trade for $G'$.
\end{proof}

\subsection{The $\mathbf{2}$-invariant characterization for $\mathbf{r=1}$}

\begin{theorem} \label{t:r=1}
  Each complete simple game $G$ with $r=1$ shift-minimal winning coalition type is either weighted or not $2$-invariant trade
  robust.
\end{theorem}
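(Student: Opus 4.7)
The plan is to reduce $G$ (via Lemmas~\ref{Lreduction1} and~\ref{Lreduction2}) to a complete simple game $G'$ without null or veto players; both reductions preserve weightedness and $m$-invariant trade robustness. After the reductions the unique shift-minimal winning coalitional type $\overline{m}=(m_1,\dots,m_t)$ of $G'$ satisfies $1 \le m_k \le n_k-1$ for every $1\le k\le t$, so the conclusion for $G$ follows from the corresponding statement for $G'$. If $t\le 1$ then $G'$ is anonymous (or the unanimity game) and is weighted, so only $t\ge 2$ needs attention.

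The main step is a uniform construction for $t\ge 3$. Set
\[
\overline{y}_1 = (m_1+1,\,m_2-1,\,m_3-1,\,m_4,\dots,m_t), \qquad
\overline{y}_2 = (m_1-1,\,m_2+1,\,m_3+1,\,m_4,\dots,m_t).
\]
Every coordinate lies in $[0,n_k]$ thanks to $1 \le m_k \le n_k-1$, so both are valid coalitional types, and $\overline{y}_1+\overline{y}_2=2\overline{m}$. The third partial sum of $\overline{y}_1$ is $M_3-1<M_3$ and the first coordinate of $\overline{y}_2$ is $m_1-1<M_1$, so both $\overline{y}_i$ are losing. Hence $\langle \overline{m},\overline{m}\mid \overline{y}_1,\overline{y}_2\rangle$ is a vectorial $2$-invariant trade; by Lemma~\ref{LVTOC} this lifts to a genuine $2$-invariant trade of coalitions in $G'$. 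So $G'$ is not $2$-invariant trade robust, and therefore (by the easy direction of Theorem~\ref{T:shiftTrade}) not weighted: the theorem holds, and in fact $G'$ with $t\ge 3$ is never weighted.

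For $t=2$, the analogue $\overline{y}_1=(m_1+1,m_2-2)$, $\overline{y}_2=(m_1-1,m_2+2)$ is a valid losing-losing trade when $\min(m_2,n_2-m_2)\ge 2$; otherwise $m_2=1$ or $n_2-m_2=1$ and one exhibits weights explicitly. The choice $(w_1,w_2)=(n_2,1)$ settles $m_2=1$, while in the case $n_2-m_2=1$ with $m_2\ge 2$ the two shift-maximal losing types are $(m_1-1,n_2)$ together with either $(m_1+m_2-1,0)$ (if $n_1\ge m_1+m_2$) or $(n_1,\,m_1+m_2-1-n_1)$ (if $n_1<m_1+m_2$), and the induced constraints on $w_1/w_2$ define a nonempty interval of feasible ratios above $1$. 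The main obstacle is this $t=2$ case analysis, since the shape of the shift-maximal losing set depends on how $n_1$ compares with $m_1+m_2$; by contrast the $t\ge 3$ construction is uniform and short.
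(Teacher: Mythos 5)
Your proposal is correct and follows essentially the same route as the paper's proof: reduce via Lemmas~\ref{Lreduction1} and~\ref{Lreduction2} to the case $1\le m_k\le n_k-1$, use the identical losing pair $(m_1\pm 1, m_2\mp 1, m_3\mp 1, m_4,\dots,m_t)$ for $t\ge 3$ and $(m_1\pm1,m_2\mp2)$ for $t=2$ with $2\le m_2\le n_2-2$, and treat $m_2=1$ and $m_2=n_2-1$ by exhibiting weights. The only difference is that for $m_2=n_2-1$ you assert a nonempty feasible interval for $w_1/w_2$ where the paper writes the explicit weights $(n_1-m_1+2,\,n_1-m_1+1)$ resp.\ $(n_2,\,n_2-1)$; your assertion is easily checked (the constraints reduce to $1<w_1/w_2<1+\tfrac{1}{n_1-m_1}$ resp.\ $1<w_1/w_2<1+\tfrac{1}{n_2-2}$), so this is a routine omission rather than a gap.
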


\begin{proof}
Due to Lemma~\ref{Lreduction1} and Lemma~\ref{Lreduction2} we can assume that $G$ contains neither nulls nor vetoers,
since also the number of shift-minimal winning coalition types is preserved by the transformations used in the respective
proofs.

For $t \geq 3$ types of players let the invariants
of $G$ be given by $\overline{n}=(n_1,\dots,n_t)$ and $\mathcal{M}=\begin{pmatrix}m_1&\dots&m_t\end{pmatrix}$, where
we abbreviate the unique shift-minimal winning coalitional vector by $\overline{m}$. From the
conditions of the general parametrization theorem in \cite{CaFr96} we conclude $1\le m_1\le n_1$, $0\le m_t\le n_t-1$, and
$1\le m_i\le n_i-1$ for all $1<i<t$. If $m_1=n_1$ then $G$ contains veto players and if $m_t=0$ then $G$ contains
null players (cf.~\cite{FrKu13EJOR}). So, we have $1\le m_i\le n_i-1$ for all $1\le i\le t$ in our situation.
We can easily check that $\overline{a}=(m_1-1,m_2+1,m_3+1,m_4,\dots,m_t)$ and $\overline{b}=(m_1+1,m_2-1,m_3-1,m_4,\dots,m_t)$
are losing. Thus, $<\overline{m},\overline{m};\overline{a},\overline{b}>$ is a $2$-trade and $G$ is not $2$-invariant trade
robust.

For $t = 2$ types of players let the invariants
of $G$ be given by $\overline{n}=(n_1,n_2)$ and $\mathcal{M}=\begin{pmatrix}m_1 \ m_2\end{pmatrix}$, where again
we abbreviate the unique shift-minimal winning coalitional vector by $\overline{m}$. From the
conditions of the general parametrization theorem in \cite{CaFr96} we conclude $1\le m_1\le n_1$ and $0\le m_2\le n_2-1$. If $m_1=n_1$ then $G$ contains veto players and if $m_t=0$ then $G$ contains
null players. So, we have $1\le m_i\le n_i-1$ for all $1\le i\le 2$ in our situation.

If $2\le m_2\le n_2-2$, then $\overline{a}=(m_1-1,m_2+2)$ and $\overline{b}=(m_1+1,m_2-2)$
are losing. Thus, $<\overline{m},\overline{m};\overline{a},\overline{b}>$ is a $2$-trade and $G$ is not $2$-invariant trade
robust.

If $m_2=1$ or $m_2=n_2-1$, then both games are weighted.

Indeed, if $m_2=1$, then $\mathcal Y = \left(
                                                          \begin{array}{cc}
                                                            m_1 & 0 \\
                                                            m_1-1 & n_2 \\
                                                          \end{array}
                                                        \right)$,
and the weights $(w_1,w_2)=(n_2,1)$ may be assigned to players in each class respectively, so that a
quota of $m_1 \cdot w_1 + w_2 = m_1\cdot n_2+1$ separates weights of winning and losing coalition types.

If $m_2=n_2-1$, then  $\mathcal Y = \left(
                                                          \begin{array}{cc}
                                                            c_1 & c_2 \\
                                                            m_1-1 & n_2 \\
                                                          \end{array}
                                                        \right)$
where $c_1=\min(n_1,m_1+n_2-2)$ and $c_2=\max(m_1+n_2-2-n_1,0)$.

Now, we have two subcases to consider:

If $c_1=n_1$ a solution is $(w_1,w_2)=(n_1-m_1+2,n_1-m_1+1)$ with quota $q=m_1\cdot w_1+(n_2-1) \cdot w_2 = m_1 \cdot (n_1-m_1+2)+(n_2-1) \cdot (n_1-m_1+1)$.

If $c_1=m_1+n_2-2$ then $c_2=0$ and
a solution is $(w_1,w_2)=(n_2,n_2-1)$ with quota $q=m_1\cdot w_1+(n_2-1) \cdot w_2 = m_1 \cdot n_2+(n_2-1)^2$.
\end{proof}

So, complete simple games with $r=1$ have the property that they are either weighted or not $2$-invariant trade robust. Now
we are going to see that this characterization is also true for $t=2$.

\subsection{The $\mathbf{2}$-invariant characterization for $\mathbf{t=2}$}

Freixas and Molinero~\cite{FrMo09DAM} prove that there is a sequence of
complete simple games $G_m$ with \emph{three} types of equivalent voters, i.e., $t=3$, and \emph{three} types of
shift-minimal winning types, i.e., $r=3$, such that $G_m$ is $m$-invariant trade
but not $(m+1)$-invariant trade robust for each positive integer $m$. Moreover, they state in Conjecture 6.1 of their paper that any complete game with $t=2$ types of equally desirable voters is either weighted or not 2-invariant trade robust. In this subsection we prove this conjecture. Prior to stating the result let us introduce some characterizations for weightedness that will be used in the sequel.  The definition of a weighted game can be rewritten to a quota-free variant:
\begin{lemma}\label{L_weighted_quota_free} Let $G = (N,\mathcal{W})$ be a simple game.
Then,
\newline
$G$ is weighted \; $\Longleftrightarrow$ \;  there are $n$ nonnegative integers $w_1,
\dots, w_n$ such that
\begin{equation} \label{E:InequalityWL}
\sum\limits_{i \in S} w_i > \sum\limits_{i \in T} w_i
\end{equation}
for all $S \in \mathcal W$ and all $T \in \mathcal L$.
\end{lemma}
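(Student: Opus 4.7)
The plan is to prove the two directions of the equivalence separately, with the forward direction being immediate and the backward direction requiring a choice of quota.

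For the forward direction ($\Rightarrow$), I would unpack the definition of a weighted game given in the terminology subsection: if $(N,\mathcal{W})$ admits weights $w_1,\dots,w_n\in\mathbb{Z}_{\ge 0}$ and a positive integer quota $q$ with $\sum_{i\in S}w_i\ge q>\sum_{i\in T}w_i$ for all $S\in\mathcal{W}$ and $T\in\mathcal{L}$, then chaining the two inequalities through $q$ gives inequality (\ref{E:InequalityWL}) directly. No further work is needed.

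For the backward direction ($\Leftarrow$), assume we have nonnegative integers $w_1,\dots,w_n$ satisfying $\sum_{i\in S}w_i>\sum_{i\in T}w_i$ for every $S\in\mathcal{W}$ and $T\in\mathcal{L}$. The natural choice is to define
\[
q \;:=\; \min_{S\in\mathcal{W}}\sum_{i\in S} w_i,
\]
which is a well-defined nonnegative integer whenever $\mathcal{W}\neq\emptyset$. By construction $\sum_{i\in S}w_i\ge q$ for every winning coalition, and since $q$ equals the weight of some specific winning coalition, hypothesis (\ref{E:InequalityWL}) yields $q>\sum_{i\in T}w_i$ for every $T\in\mathcal{L}$. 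To conclude that the weighted representation $[q;w_1,\dots,w_n]$ is legitimate, I need $q$ to be a positive integer, which follows from $q>\sum_{i\in\emptyset}w_i=0$ in the standard case where $\emptyset\in\mathcal{L}$.

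The only subtlety, and what I expect to be the one point requiring care, is handling the degenerate boundary cases: when $\mathcal{W}=\emptyset$ (every coalition loses, so any positive quota larger than $\sum_{i\in N}w_i$ works, e.g.\ with $w_i=0$ and $q=1$), and when $\mathcal{L}=\emptyset$ (every coalition wins, so $w_i=0$ and $q=0$ would force $q>0$ to fail; the standard convention is to take all weights zero with $q=0$, or simply note that the statement of the lemma tacitly excludes such trivial games because inequality (\ref{E:InequalityWL}) is vacuous when one side is empty). I would dispose of these cases in a single sentence and then present the generic argument above as the main content of the proof.
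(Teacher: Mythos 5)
Your argument is correct and coincides with what the paper does: the paper states this lemma without proof because its definition of a weighted threshold function is already the quota-free condition $\sum_i w_i(x_i-y_i)>0$, and the terminology section already records that a quota can then be inserted between the minimum winning weight and the maximum losing weight, which is exactly your choice $q=\min_{S\in\mathcal{W}}\sum_{i\in S}w_i$. Your handling of the degenerate cases and of the positivity of $q$ (via $\emptyset\in\mathcal{L}$) is the standard bookkeeping and raises no issues.
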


Moreover, we can use a single weight for equivalent players, i.e., a common
weight $w_i$ for each voter $p \in N_i$ where $N_i$ is an equivalence class of
players according to the desirability relation. If the game is complete we have a total order
among the equivalence classes, $N_1 >\dots>N_t$. Assume from now on $t=2$ so that
$N_1 \neq \emptyset$ and $N_2 \neq \emptyset$ is a partition of $N$. By $\mathcal{W}^v$ we
denote the set of winning coalition types and by $\mathcal{L}^v$ the set of losing coalition types. For instance, $(x,y) \in \mathcal W^v$ means that all coalition $S \subseteq N$ such that $|S\cap N_1|=x$ and $|S\cap N_2|=y$ is winning.
With this,  Lemma~\ref{L_weighted_quota_free} can be rewritten to:
\begin{lemma}
Let $G=(N,\mathcal W)$ be a complete simple game with two types of voters. Then,
\newline
$G$ is weighted \; $\Longleftrightarrow$ \; there are two integers $w_1,w_2\geq 0$ such that
\begin{equation} \label{E:InequalityVectors}
[(x,y) - (x',y')]  \cdot (w_1,w_2)  >  0
\end{equation}
for all   $(x,y) \in \mathcal{W}^{v}$ and all $(x',y') \in \mathcal{L}^{v}$ and ``$\cdot$" stands here for the inner product.
\end{lemma}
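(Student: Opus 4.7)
The plan is to prove both directions directly by relating per-player weights to per-type weights using the fact that a weighted game admits a \emph{normalized} representation.

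For the easy direction ($\Leftarrow$), suppose nonnegative integers $w_1,w_2$ satisfying (\ref{E:InequalityVectors}) are given. Define per-player weights by setting $\tilde w_p = w_1$ if $p\in N_1$ and $\tilde w_p = w_2$ if $p\in N_2$. For any $S\in\mathcal{W}$ of type $(x,y)$ and any $T\in\mathcal{L}$ of type $(x',y')$ one computes
\begin{equation*}
\sum_{i\in S}\tilde w_i - \sum_{i\in T}\tilde w_i \;=\; (x-x')w_1 + (y-y')w_2 \;=\; [(x,y)-(x',y')]\cdot(w_1,w_2) \;>\; 0
\end{equation*}
by hypothesis, so the hypothesis of Lemma~\ref{L_weighted_quota_free} is satisfied and $(N,\mathcal{W})$ is weighted.

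For the harder direction ($\Rightarrow$), suppose $(N,\mathcal{W})$ is weighted. By Lemma~\ref{L_weighted_quota_free} there exist nonnegative integers $w_1,\ldots,w_n$ with $\sum_{i\in S}w_i>\sum_{i\in T}w_i$ for all $S\in\mathcal{W}$, $T\in\mathcal{L}$. The key step is to pass to a \emph{normalized} representation in which players of the same equivalence class receive the same weight; I would argue this by averaging: for each permutation $\pi\in\mathrm{Sym}(N_1)\times\mathrm{Sym}(N_2)$, the reweighting $w^{\pi}_i=w_{\pi(i)}$ is still a valid representation because $\pi$ sends winning coalitions to winning coalitions and losing to losing (the two classes are permutation-invariant). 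Averaging $w^\pi$ over all such $\pi$ yields nonnegative rational weights $w^*$ that are constant on each $N_j$ and still satisfy the strict inequality (the strict inequalities are preserved under convex combinations). Clearing denominators gives nonnegative integer weights $w'_p$ with $w'_p=w'_1$ for all $p\in N_1$ and $w'_p=w'_2$ for all $p\in N_2$.

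Now take the two class-weights $(w'_1,w'_2)$; for any winning type $(x,y)$ and losing type $(x',y')$, summing over representative coalitions $S,T$ with those types gives
\begin{equation*}
[(x,y)-(x',y')]\cdot (w'_1,w'_2) \;=\; \sum_{i\in S}w'_i-\sum_{i\in T}w'_i \;>\;0,
\end{equation*}
which is exactly (\ref{E:InequalityVectors}). The main (and only) obstacle is the normalization step: one must verify that the symmetrized weights remain a valid representation and can be rescaled to nonnegative integers. This is precisely the well-known normalization fact for weighted games already invoked in the paper's discussion preceding Lemma~\ref{L_weighted_quota_free}, so the argument reduces to a standard averaging.
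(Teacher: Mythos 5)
Your proof is correct and follows essentially the route the paper intends: the paper states this lemma without proof, justifying it by the ``well known fact that any weighted game admits normalized representations,'' and your symmetrization/averaging argument (permuting weights within each equivalence class, taking the convex combination, and clearing denominators) is exactly the standard way to establish that fact, with the easy direction handled by the obvious lift of type-weights to player-weights.
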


For the proof of the theorem for $t=2$ two special parameters of a complete simple game will play a key role so that we give even another reformulation of Lemma~\ref{L_weighted_quota_free}:
\begin{lemma}
\label{L_MP}
Let $G=(N,\mathcal{W})$ be a complete simple game with two types of voters. Then,
\newline
$G$ is weighted \; $\Longleftrightarrow$ \; there are two integers $w_1,w_2\geq 0$ such that
\begin{equation} \label{E:InequalityVectors2}
w_2  >  M w_1\quad\text{and}\quad w_1 >  P w_2,
\end{equation}
where
$$ M = \max_{(x,y) \in \mathcal{W}^v,\,(x',y') \in \mathcal{L}^v\,:\,x' \geq x} \dfrac{x'-x}{y-y'}$$
and
$$ P = \max_{(x,y) \in \mathcal{W}^v,\,(x',y') \in \mathcal{L}^v\,:\,x' < x} \dfrac{y'-y}{x-x'}$$ fulfill
$0\le M<1$ and $P\ge 1$.
\end{lemma}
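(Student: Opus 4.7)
The plan is to systematically rearrange the pairwise inequalities from Lemma~\ref{L_weighted_quota_free} into the two claimed conditions. Since any weighted representation of a complete simple game admits a normalized form with equal weights within each equivalence class, weightedness for $t=2$ is equivalent to the existence of nonnegative integers $w_1,w_2$ satisfying
\[
w_1(x-x') + w_2(y-y') > 0
\]
for every $(x,y)\in\mathcal{W}^v$ and $(x',y')\in\mathcal{L}^v$. I would then organize these pairs according to the sign of $x-x'$.

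In the case $x'\geq x$, completeness excludes $(x',y')\succeq(x,y)$, which together with $x'\geq x$ forces $x'+y' < x+y$ and hence $y > y'$. The inequality rearranges to $(x'-x)/(y-y') < w_2/w_1$, and maximizing over such pairs produces the requirement $w_2 > M w_1$. In the case $x' < x$ with $y'\leq y$ the inequality is automatic, while in the remaining subcase $x' < x$ and $y' > y$ it rearranges to $(y'-y)/(x-x') < w_1/w_2$, whose maximum gives $w_1 > P w_2$. The converse implications are immediate, and passing from positive reals to integer weights is routine since $M$ and $P$ are rational.

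Finally, I would verify the stated bounds. $M\geq 0$ is immediate; $M<1$ follows from the strict inequality $y-y' > x'-x$ derived in the first case, so each ratio entering the maximum for $M$ is strictly less than $1$ (if the defining set is empty, set $M=0$ by convention, in which case the condition $w_2 > Mw_1$ is vacuous). For $P\geq 1$ I would invoke Theorem~\ref{theo:CF96}: with $t=2$ the parametrization conditions guarantee a shift-minimal winning type $(m_1,m_2)$ with $m_1\geq 1$ and $m_2\leq n_2-1$; by definition of shift-minimality the swapped type $(m_1-1,m_2+1)$ is losing, witnessing $(y'-y)/(x-x')=1$ in the maximum defining $P$. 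The main obstacle is conceptual rather than technical: one must exploit completeness in the first case to derive the strict sum-inequality $x+y > x'+y'$ (which simultaneously delivers $y>y'$ and $M<1$), and one must use the structural conditions on the matrix $\mathcal{M}$ from the parametrization theorem to produce an explicit witness for $P\geq 1$.
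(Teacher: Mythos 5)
Your proposal is correct and follows essentially the same route as the paper's proof: the strict sum inequality $x+y>x'+y'$ forced by completeness in the case $x'\ge x$ (yielding both $y>y'$ and $M<1$), and the shift of a shift-minimal winning type $(m_1,m_2)$ with $m_1\ge 1$, $m_2\le n_2-1$ guaranteed by Theorem~\ref{theo:CF96} as the witness for $P\ge 1$. The only difference is that you spell out the case split and the rearrangements in slightly more detail than the paper does.
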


\begin{proof} Let $(x,y) \in \mathcal{W}^v$ and $(x',y') \in \mathcal{L}^v$. If $x' \geq x$, then
$x+y >x'+y'$, so that $y-y'>x'-x \geq 0$. Thus, $M$ is well defined and we have $0\le M<1$. Also
$P$ is well defined, since we assume $x' < x$ in its definition. For $r=2$ in matrix $\mathcal M$ in Theorem~\ref{theo:CF96} we conclude the existence of a shift-minimal winning type $(a,b)$ with $a>0$ and $b<|N_2|$, i.e.,
$(a-1,b+1)$ is losing. Thus, we have $P\ge \frac{(b+1)-b}{a-(a-1)}=1$.

It remains to remark that all inequalities of the definition of a weighted game are implied by the ones in (\ref{E:InequalityVectors2}).
\end{proof}

\begin{corollary}
\label{C_MP}
Let $G=(N,\mathcal W)$ be a complete simple game with two types of voters. Using the notation from Lemma~\ref{L_MP}, we have
\begin{equation} \label{E:MP}
\text{$G$ is weighted} \ \  \   \Longleftrightarrow \ \ \ M \, P < 1.
\end{equation}
\end{corollary}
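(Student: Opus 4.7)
My approach is to use Lemma~\ref{L_MP} directly and reduce its two linear inequalities in $(w_1,w_2)$ to the single scalar condition $MP<1$ by eliminating the weights. Geometrically, the open region $\{(w_1,w_2)\in\mathbb{R}_{\ge 0}^2 : w_2>Mw_1,\, w_1>Pw_2\}$ is nonempty if and only if the slopes $M$ and $1/P$ leave room between them, which happens precisely when $M<1/P$, i.e.\ $MP<1$. So the plan is to convert this geometric observation into two algebraic implications.

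For the direction $(N,\mathcal{W})$ weighted $\Rightarrow MP<1$, I will start from the integer pair $(w_1,w_2)$ with $w_2>Mw_1$ and $w_1>Pw_2$ supplied by Lemma~\ref{L_MP}. The first step is to argue $w_1>0$, which uses $P\ge 1$ crucially: if $w_1=0$ then $w_1>Pw_2$ forces $w_2=0$ (since $w_2\ge 0$), and then $w_2>Mw_1$ degenerates to $0>0$. Once $w_1>0$ is in hand, dividing $w_1>Pw_2$ by $w_1$ and chaining with $w_2/w_1>M$ produces
\[
1 \;>\; P\cdot\frac{w_2}{w_1}\;>\;PM,
\]
which is the desired bound.

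For the converse $MP<1 \Rightarrow (N,\mathcal{W})$ weighted, I will construct suitable weights explicitly. Since $M$ and $P$ are rational (being finite maxima of rational ratios over integer coalition types), the open interval $(M,1/P)$ is nonempty and contains a rational $\alpha$. Taking $(w_1,w_2)=(1,\alpha)$ gives $w_2=\alpha>M=Mw_1$ and $w_1=1>P\alpha=Pw_2$, with $w_2\ge 0$ because $\alpha>M\ge 0$. Clearing the denominator of $\alpha$ turns this into a pair of nonnegative integers satisfying the same two strict inequalities, and Lemma~\ref{L_MP} then certifies that $(N,\mathcal{W})$ is weighted.

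The only step that carries real conceptual weight is the forward implication, specifically the verification that $w_1>0$, which relies on the structural bounds $0\le M<1$ and $P\ge 1$ already established in Lemma~\ref{L_MP}. Everything else is routine algebra, so I expect no further obstacles.
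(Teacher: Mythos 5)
Your proof is correct and matches the argument the paper leaves implicit: Corollary~\ref{C_MP} is stated without its own proof as an immediate consequence of Lemma~\ref{L_MP}, and your elimination of the weights (using $P\ge 1$ to force $w_1>0$, then chaining $1>P\,w_2/w_1>PM$, and conversely picking a rational in the nonempty interval $(M,1/P)$ and clearing denominators) is exactly that consequence spelled out. One cosmetic slip: if $w_1=0$ then $w_1>Pw_2$ is already impossible because $Pw_2\ge 0$, rather than ``forcing $w_2=0$'', but your conclusion $w_1>0$ stands either way.
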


We still need an additional technical trivial lemma.
\begin{lemma}
  \label{L_technical_inequality}
  Let $s,u\in\mathbb{R}_{\ge 0}$ and $t,v\in\mathbb{R}_{>0}$. If $t>v$ and $\frac{s}{t}\ge \frac{u}{v}$, then we have
  $
    \frac{s-u}{t-v}\ge \frac{s}{t}
  $.
\end{lemma}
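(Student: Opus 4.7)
The plan is to reduce the inequality to the hypothesis by a direct cross-multiplication, using only the sign conditions given. Since $t>0$ and $v>0$, the hypothesis $\frac{s}{t}\geq\frac{u}{v}$ rewrites as $sv\geq tu$. Since $t>v$, the denominator $t-v$ is strictly positive, and so is $t$; hence both sides of the target inequality $\frac{s-u}{t-v}\geq\frac{s}{t}$ can be cleared safely by multiplying through by $t(t-v)>0$.

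After this multiplication the target becomes $t(s-u)\geq s(t-v)$, which expands to $ts-tu\geq st-sv$, and cancelling the common $st$ term reduces exactly to $sv\geq tu$. This is precisely the inequality extracted from the hypothesis, so the chain of equivalences closes and the lemma follows.

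There is no real obstacle; the only delicate point to flag is that the sign conditions ($t,v>0$ together with $t-v>0$) are exactly what legitimises the cross-multiplications, so I would state them explicitly before rewriting the fractions. I would also note in passing that the hypothesis forces $s\geq u$ (because $s\geq \tfrac{ut}{v}\geq u$ since $t\geq v$), which makes the sign of the left-hand side $\tfrac{s-u}{t-v}$ nonnegative; this is not logically needed for the argument, but it makes the inequality read naturally and confirms that the conclusion is compatible with $\tfrac{s}{t}\geq 0$.
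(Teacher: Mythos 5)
Your proof is correct and is essentially the same as the paper's: both reduce the claim to $sv\ge ut$ and clear denominators by the positive quantity $t(t-v)$ (the paper runs the chain forward, you run it backward, but the computation is identical). Your extra remark that the hypotheses force $s\ge u$ is a nice sanity check but, as you note, not needed.
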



Let us finally prove the result of this subsection, which was previously stated as
Conjecture 6.1 in \cite[page 1507]{FrMo09DAM}.

\begin{theorem} \label{Tmain}
Let $G = (N,\mathcal W)$ be a complete simple game with two types of voters.
Then, $G$ is weighted if and only if $G$ is $2$-invariant trade robust.
\end{theorem}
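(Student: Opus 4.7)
The forward direction is immediate from Theorem~\ref{T:shiftTrade}: weighted is equivalent to (full) invariant-trade robustness, which trivially implies $2$-invariant trade robustness. For the converse I would argue by contrapositive: assume $G$ is not weighted. By Lemmas~\ref{Lreduction1} and~\ref{Lreduction2} I may pass to the game obtained by deleting null and veto players, and this preserves both $r$ and the status of $2$-invariant trade robustness; if the resulting number $r$ of shift-minimal winning coalition types equals $1$, then Theorem~\ref{t:r=1} concludes. So from here I assume $r\ge 2$ and write the shift-minimal winning vectors as $(a_1,b_1),\dots,(a_r,b_r)$ with $a_1>\cdots>a_r\ge 1$, $0\le b_1<\cdots<b_r$, and strictly increasing partial sums $a_k+b_k$.

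By Corollary~\ref{C_MP}, ``not weighted'' is equivalent to $MP\ge 1$. My first move is to observe, via Lemma~\ref{L_technical_inequality} applied along chains of dominations (replacing a winning coalition by a shift-minimal one it dominates, and a losing coalition by a shift-maximal one dominating it), that both $M$ and $P$ are attained by pairs consisting of a shift-minimal winning coalition and a shift-maximal losing coalition. Fix such achievers: a shift-min winning $(a_i,b_i)$ and losing $(a_i+A,\,b_i-B)$ with $M=A/B$, and a shift-min winning $(a_j,b_j)$ and losing $(a_j-C,\,b_j+D)$ with $P=D/C$, where $A\ge 0$, $B\ge 1$, and $C,D\ge 1$. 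Since $M<1$ always, $A<B$; combined with the inequality $AD\ge BC$ this forces $A\ge 1$ and $D>C$ strictly.

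My candidate $2$-invariant trade is
\[
\bigl\langle (a_i,b_i),\,(a_j,b_j)\bigm| (a_i+C,\,b_i-D),\,(a_j-C,\,b_j+D)\bigr\rangle,
\]
whose coordinate sums match by construction. The two winning coalition types are shift-minimal, and the second losing type $(a_j-C,b_j+D)$ is losing by the choice of the $P$-achiever. The technical heart of the argument---and the main obstacle---will be verifying that $(a_i+C,\,b_i-D)$ is also losing, i.e.\ that it dominates no shift-minimal winning $(a_k,b_k)$. Non-domination of $(a_i,b_i)$ follows from $D>C$, and non-domination of $(a_k,b_k)$ for $k>i$ follows from $D>C$ together with the strict increase of the partial sums $a_k+b_k$. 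For $k<i$ I will combine the optimality of $M$ with the inequality $AD\ge BC$ to rule out the ``staircase'' dominations. The case $i=j$, where the same shift-minimal winning vector is used twice (as in the explicit example preceding the theorem, which already realizes this pattern), is clean; a case analysis handles $i\ne j$, possibly by adjusting the choice of $i,j$ or of the pair $(C,D)$ to fit the specific staircase geometry of the game.
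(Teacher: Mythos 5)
Your framework is essentially the paper's: reduce via Lemmas~\ref{Lreduction1} and~\ref{Lreduction2}, translate non-weightedness into $MP\ge 1$ via Corollary~\ref{C_MP}, and build a length-two trade from extremal pairs achieving $M$ and $P$. But the proposal stops exactly where the real work begins, and the single candidate trade you commit to does not work in all cases. Two concrete problems. First, well-definedness: you never check that $(a_i+C,\,b_i-D)$ is a coalition type at all, and it need not be --- $b_i-D$ can be negative, since $D$ comes from the $P$-achieving pair and is in no way tied to the second coordinate $b_i$ of the $M$-achieving winning vector; likewise $a_i+C$ can exceed $n_1$. Second, losingness: you yourself flag the verification that $(a_i+C,b_i-D)$ is losing as ``the main obstacle'' and then offer only a plan (``I will combine the optimality of $M$ with $AD\ge BC$\dots'', ``possibly by adjusting the choice of $i,j$ or of the pair $(C,D)$''). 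That verification \emph{is} the theorem; without it nothing is proved. A further unproved step is the claim that $M$ and $P$ are attained at shift-minimal winning / shift-maximal losing pairs: for $P$ in particular, shifting the winning vector down can flip the side condition $x'<x$ to $x'\ge x$, so the modified pair competes for $M$ rather than $P$, and the ``chain of dominations'' argument does not go through verbatim.

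The paper's proof shows why a single construction cannot suffice. It splits the converse into two steps: (i) a shifting argument with a four-case analysis showing that for $t=2$, $2$-invariant trade robustness already implies $2$-trade robustness; and (ii) a three-case analysis (a)/(b)/(c), governed by the relative sizes of $c-a$ versus $a'-c'$ and of $b-d$ versus $d'-b'$, in which \emph{different} trades are produced: in case (a) the two losing vectors are obtained by deleting units from known losing vectors (so well-definedness and losingness come free from monotonicity), in case (b) a new losing vector is sandwiched strictly between known vectors before the extremality of $M$ is invoked, and in case (c) a new \emph{winning} vector $(a'',b'')$ --- not necessarily shift-minimal --- is produced, which is precisely why step (i) is needed at all. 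Your candidate corresponds roughly to one of these regimes only. To salvage your route you would have to reproduce essentially this case split; as written the proof has a genuine gap.
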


\begin{proof}
The direct part is immediate since $G$ being weighted implies $G$ satisfies $m$-invariant trade robustness for all $m>1$.
For the other part we start by proving that if $G$ is a complete simple game with $t=2$ types of voters and $G$ is $2$-invariant trade robust, then it is $2$-trade robust.

Let $\left\langle (a_1,b_1),(a_2,b_2);(u_1,v_1),(u_2,v_2)\right\rangle$ be a $2$-trade of $G$
such that $(a_1,b_1)$ and $(a_2,b_2)$ are minimal winning. If both coalition types are shift-minimal, we have finished.
In the remaining cases we construct a $2$-trade with one shift-minimal winning coalition type more than before.
W.l.o.g.\ we assume that $(a_1,b_1)$ is not shift-minimal, so that we consider the shift to $(a_1-1,b_1+1)$.
If $u_1\ge 1$ and $v_1\le n_2-1$ then we can replace $(u_1,v_1)$ by the losing coalitional vector $(u_1-1,v_1+1)$. By symmetry
the same is true for $(u_2,v_2)$. Thus, for the cases, where we can not shift one of the losing vectors, we have
$$
  \left(u_1=0 \,\vee\, v_1=n_2\right) \,\wedge\, \left(u_2=0 \,\vee\, v_2=n_2\right).
$$
\begin{enumerate}
  \item[(1)] $u_1=0$, $u_2=0$:\\
             Since $u_1+u_2=a_1+a_2$ we have $a_1=a_2=0$. Since $(0,b_1)$, $(0,b_2)$ are
             winning and $(0,v_1)$, $(0,v_2)$ are losing, we have $\min(b_1,b_2)>\max(v_1,v_2)$, which
             contradicts $b_1+b_2=v_1+v_2$.
  \item[(2)] $v_1=n_2$, $v_2=n_2$:\\
             Since $b_1+b_2=v_1+v_2$ we have $b_1=b_2=n_2$. Since $(a_1,n_2)$, $(a_2,n_2)$ are winning and
             $(u_1,n_2)$, $(u_2,n_2)$ are losing, we have $\min(a_1,a_2)>\max(u_1,u_2)$, which contradicts
             $a_1+a_2=u_1+u_2$.
  \item[(3)] $u_1=0$, $v_2=n$:\\
             Since $u_1+u_2=a_1+a_2$ we have $a_2\le u_2$. Comparing the winning coalitional vector $(a_2,b_2)$ with the
             losing vector $(u_2,n_2)$, yields $b_2>n_2$, which is not possible.
  \item[(4)] $u_2=0$, $v_1=n$:\\
             Similar to case~(3).
\end{enumerate}
Thus, a shift of one of the losing vectors is always possible, if not both winning vectors are shift-minimal.

According to Theorem~\ref{T:ChowElgotTrade} it remains to prove that for $t=2$ it is not possible for $G$ to be $2$-trade robust but not weighted.

Let $(a,b)$ and $(a',b')$ be two winning vectors, $(c,d)$ and $(c',d')$ be two losing vectors such that
\begin{equation} \label{E:MPextreme}
M = \dfrac{c-a}{b-d} \qquad \text{and} \qquad P = \dfrac{d'-b'}{a'-c'},
\end{equation}
where we assume that the vectors are chosen in such a way that both $c-a$ and $d'-b'$ are minimized.
We remark $a'-c'>0$, $d'-b'>0$, $b-d>0$ and $c-a\ge 0$. The latter inequality can be strengthened to
$c-a>0$, since $c-a$ implies $M=0$ and $MP<1$, which is a contradiction to the non-weightedness of $G$.

Corollary~\ref{C_MP} implies $MP\ge 1$, so that
\begin{equation}
\label{E:MPchanged}
\dfrac{c-a}{b-d} \geq \dfrac{a'-c'}{d'-b'}.
\end{equation}
With this, we have only the following three cases:
\begin{enumerate}
 \item[(a)] $c-a \geq a'-c'$ and $b-d \leq d'-b'$.
 \item[(b)] $c-a > a'-c'$ and $b-d > d'-b'$.
 \item[(c)] $c-a < a'-c'$.
\end{enumerate}
If $c-a = a'-c'$ then we have $b-d \leq d'-b'$ according to Inequality~(\ref{E:MPchanged}), i.e.,
we are in case (a). If $c-a > a'-c'$ then either case~(a) or case~(b) applies. The remaining cases
are summarized in (c).

\begin{enumerate}
 \item[(a)] Since $c+c' \geq a+a'$ and $d+d' \geq b+b'$, we can delete convenient units of some
            coordinates of $(c,d)$ and $(c',d')$ to obtain two well-defined losing vectors
            satisfying $(c'',d'') \leq (c,d)$ and $(c''',d''') \leq (c',d')$ with $c''+c''' = a+a'$
            and $d''+d''' = b+b'$. Thus,
            $
              \langle (a,b), (a',b') ; (c'',d''), (c''',d''')\rangle
            $
            certifies a failure of $2$-trade robustness.
 \item[(b)] Consider $(c'',d'') = (a+a'-c',b+b'-d')$. Since $a'-c'>0$ and $c-a > a'-c'$ we have $a<c''<c$.
            Since $b-d > d'-b'$ and $d'-b'>0$ we have $d<d''<b$. Thus, $(c'',d'')$ is a well-defined
            coalition type. Assuming that $(c'',d'')$ is winning, we obtain
            $$
              \dfrac{c-c''}{d''-d} = \dfrac{\overset{> 0}{\overbrace{c-a}}-(\overset{>0}{\overbrace{a'-c'}})}
              {\underset{>0}{\underbrace{b-d}}-(\underset{>0}{\underbrace{d'-b'}})}\quad
              \overset{\text{Lemma~\ref{L_technical_inequality}}}{\ge} \quad\dfrac{c-a}{b-d} = M,
            $$
            using $b-d > d'-b'$ and Inequality~(\ref{E:MPchanged}). Since $c-c''<c-a$ we have either a
            contradiction to the maximality of $M$ or the minimality of $c-a$. Thus, $(c'',d'')$ has
            to be losing and
            $
              \langle (a,b), (a',b') ; (c',d'), (c'',d'')\rangle
            $
            certifies a failure of $2$-trade robustness.
 \item[(c)] With $c-a < a'-c'$ Inequality~(\ref{E:MPchanged}) implies $d'-b'>b-d$. Consider $(a'',b'')=(c+c'-a,d+d'-b)$.
            Since $c-a> 0$ and $c-a < a'-c'$ we have $c'< a''<a'$. Since $d'-b'>b-d$ and $b-d>0$ we have $b'<b''<d'$. Thus,
            $(a'',b'')$ is a well-defined  coalition type. Assuming that $(a'',b'')$ is losing, we obtain
            $$
              \dfrac{b''-b'}{a'-a''} = \dfrac{\overset{>0}{\overbrace{d'-b'}}-(\overset{>0}{\overbrace{b-d}})}
              {\underset{>0}{\underbrace{a'-c'}}-(\underset{> 0}{\underbrace{c-a}})}
              \quad \overset{\text{Lemma~\ref{L_technical_inequality}}}{\ge} \quad
              \dfrac{d'-b'}{a'-c'} = P
            $$
            using $c-a < a'-c'$ and Inequality~(\ref{E:MPchanged}). Since $b''-b'<d'-b'$ we have either a
            contradiction to the maximality of $P$ or the minimality of $d'-b'$. Thus, $(a'',b'')$ has
            to be winning and
            $
              \langle (a,b), (a'',b'') ; (c,d), (c',d') \rangle
            $
            certifies a failure of $2$-trade robustness.
\end{enumerate}
\vspace*{-5mm}
\end{proof}

Let us have a look at Example~\ref{EXAMPLE:CC} again. We have already observed that this game is not weighted. Nevertheless it can be represented
as the intersection $[7;1,1,1,1,1,1,1,1,1,1]\cap[12;6,6,1,1,1,1,1,1,1,1]$, i.e., there are only two types of provinces -- the large ones, Ontario
and Quebec, and the small ones, see \cite{FMR12}. Indeed the game is complete and the minimal winning vectors are given by $(2,5)$ and
$(1,6)$. The maximal losing vectors are given by $(2,4)$, $(1,5)$, and $(0,8)$, so that we have $M=\frac{1}{2}$ and $P=3$. These values
are uniquely attained by the coalition types $(1,6)$, $(2,5)$ and $(2,4)$, $(0,8)$. Thus we are in case (c) of the proof of Theorem~\ref{Tmain}
and determine the winning coalitional vector $(a'',b'')=(1,6)$. Indeed,
$
  \langle (1,6), (1,6) | (2,4), (0,8)\rangle
$
certifies a failure of $2$-trade robustness. We remark that our previous argument for non-weightedness was exactly of this form and that the coalition 
type $(1,6)$ is shift-minimal. Let us finally conclude this subsection by recalling that Theorem~\ref{Tmain} establishes that a complete game is weighted 
if and only it is $2$-invariant trade robust. Checking that property requires fewer computations than $2$-trade robustness, which was proved 
in~\cite{Her11} to be sufficient for testing weightedness.

\section{Further invariant trade characterizations}
\label{sec_invariant_further}

We have seen in the previous section that complete simple games with $t=2$ or $r=1$ have the property that they are either weighted or not $2$-invariant trade robust.

For other combinations of $r$ and $t$ it is interesting to ascertain which is the maximum integer $m$ such that $m$-invariant trade robustness for the given game with parameters $r$ and $t$ guarantees that it is weighted. Note first that $t=1$ implies $r=1$ so that the pairs $(r,t)=(r,1)$ for $r>1$ are not feasible.
The results in the previous section allow us to conclude that for $(r,t) =(1,t)$ with $t$ arbitrary or for $(r,t) =(r,2)$ with $r$ arbitrary such an $m$ is given by $2$.

The existence of a sequence of complete games being $m$-invariant trade robust but not $(m+1)$-invariant trade robust is proven for $m\geq 4$ by using complete games with parameters $(r,t)=(3,3)$ in~\cite{FrMo09DAM}.  This sequence of games is uniquely characterized by
  $\overline n = (2,m,m-1)$ and
  $
    \mathcal M =
    \left(
      \begin{array}{ccc}
        2 & 0 & 1   \\
        1 & 0 & m-1 \\
        0 & m & m-2
      \end{array}
    \right)
  $. We wonder what is happening for the remaining cases of the parameters $r$ and $t$.

Consider first the smallest case: $(r,t)=(2,3)$.
\begin{lemma}
  \label{lemma_t2r2_itr}
  For $m\ge 3$ the sequence of complete simple games uniquely characterized by
  $\overline n = (2,m,m)$ and
  $
    \mathcal M =
    \left(
      \begin{array}{ccc}
        2 & 0 & 1   \\
        1 & 1 & m-1 \\
      \end{array}
    \right)
  $
  is $(m-1)$-invariant trade robust but not $m$-invariant trade robust.
\end{lemma}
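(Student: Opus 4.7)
My plan is to handle the two directions separately: for the failure at level $m$, exhibit an explicit vectorial $m$-trade; for the $(m-1)$-invariant trade robustness, suppose a trade exists and rule it out via coordinate bookkeeping on the losing vectors. For the failure, I take
$$\bigl\langle\,\underbrace{(1,1,m-1),\dots,(1,1,m-1)}_{m}\,;\,(2,0,0),\;\underbrace{(1,0,m),\dots,(1,0,m)}_{m-2},\;(0,m,m)\,\bigr\rangle;$$
after checking via the partial-sum order $\succeq$ of Theorem~\ref{theo:CF96} that $(2,0,0)$, $(1,0,m)$, and $(0,m,m)$ are all losing, both sides sum componentwise to $(m,m,m(m-1))$, and since $(1,1,m-1)$ is a row of $\mathcal{M}$, this certifies failure of $m$-invariant trade robustness.

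For the converse, suppose $\langle a\cdot(2,0,1),\,b\cdot(1,1,m-1);\,\overline{y}_1,\dots,\overline{y}_{a+b}\rangle$ is a vectorial trading transform with $k := a+b \leq m-1$ and each $\overline{y}_i$ losing. Partition the losing coalitional types by their first coordinate: type A has $y_1=0$ (arbitrary $y_2,y_3\leq m$), type B$'$ has $y_1=1,\,y_2=0$ (so $y_3\leq m$), type B$''$ has $y_1=1,\,y_2\geq 1$ and $y_2+y_3\leq m-1$, and type C is the singleton $(2,0,0)$. Let $\alpha,\beta',\beta'',c$ be the respective counts and $\beta=\beta'+\beta''$. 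Matching first coordinates gives $\beta+2c = 2a+b$; combined with $\alpha+\beta+c = a+b$ this yields $\alpha=c-a$.

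The key step bounds the third coordinate. The per-type inequalities $A_3 \leq \alpha m$, $B_3' \leq \beta' m$, $B_3'' \leq \beta''(m-1) - B_2''$, combined with the second-coordinate identity $b = A_2 + B_2''$ and $A_2\leq\alpha m$, give
$$a + b(m-1) \;=\; A_3 + B_3' + B_3'' \;\leq\; 2\alpha m + \beta m - \beta'' - b.$$
Substituting $\alpha = c-a$ and $\beta = 2a+b-2c$ the right-hand side collapses to $bm - \beta'' - b$, reducing the inequality to $a+\beta''\leq 0$ and hence $a=\beta''=0$. With $a=0$ and no type B$''$, the second coordinate forces $b = A_2 \leq cm$ and the third forces $cm \leq b$, so $b=cm$; the constraint $b\leq m-1$ with $c\in\mathbb{Z}_{\geq 0}$ then forces $c=0$, so $b=0$, and the trade is trivial -- a contradiction.

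The main difficulty is the four-way classification of losing vectors: it has to be fine enough that the per-type bounds on the third coordinate, when summed, exactly cancel the winning side's contribution of $bm$. Without splitting type B into B$'$ (where $y_2=0$ leaves $y_3$ unrestricted up to $m$) and B$''$ (where $y_2\geq 1$ ties $y_2+y_3$ to at most $m-1$) the cancellation that yields $a+\beta'' \leq 0$ breaks, and one cannot force $a=0$.
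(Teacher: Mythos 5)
Your proof is correct and follows essentially the same route as the paper: the identical certificate $m\cdot(1,1,m-1)=(2,0,0)+(m-2)\cdot(1,0,m)+(0,m,m)$ for the failure at level $m$, and for the converse a four-way classification of the losing coalitional types (your classes A, B$'$, B$''$, C correspond to the paper's maximal losing types $(0,m,m)$, $(1,0,m)$, $(1,1,m-2)$, $(2,0,0)$) followed by linear bookkeeping on the coordinate sums. The only difference is organizational: you track the coordinate sums of the actual losing vectors with per-class upper bounds rather than replacing each losing vector by a dominating maximal losing type, which collapses the paper's case analysis into the single inequality $a+\beta''\le 0$.
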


\begin{proof}
For brevity we set $w_1=(2,0,1)$ and $w_2=(1,1,m-1)$. The maximal losing coalition types are given by
$l_1=(2,0,0)$, $l_2=(1,0,m)$, $l_3=(1,1,m-2)$, and $l_4=(0,m,m)$. Since $m\cdot w_2=1\cdot l_1+(m-2)\cdot l_2+1\cdot l_4$,
the game is not $m$-invariant trade robust.

Now assume that there are non-negative integers $a,b,c,d,e,f$ with $a+b = d+e+f>0$ and
$$
  a\cdot w_1+b\cdot w_2\le c\cdot l_1+d\cdot l_2+e\cdot l_3+f\cdot l_4.
$$
We conclude
\begin{eqnarray}
  2a+b &\le& 2c+d+e,\label{eq_1}\\
  b &\le& e+m\cdot f \label{eq_2},\text{ and}\\
  a+(m-1)\cdot b &\le& (m-1)\cdot (d+e+f) \label{eq_3}.
\end{eqnarray}
Assuming $f=0$, we conclude $b\le e$ from Inequality~(\ref{eq_2}), so that we have $a\ge c+d$ due to $a+b = d+e+f$.
Inequality~(\ref{eq_1}) then yields $a=c$, $b=e$, and $d=0$. By inserting this into Inequality~(\ref{eq_3}), we conclude
$c=e=0$, which contradicts $d+e+f>0$. Thus, we have $f\ge 1$.

Inequality~(\ref{eq_1}) yields $c\ge a+f\ge 1$. Assuming $b\le d+e+f$ we conclude $a\ge c$ from $a+b = d+e+f$, which
is a contradiction to $c\ge a+f$ and $f\ge 1$. Thus, we have $b\ge d+e+f+1$.

Inequality~(\ref{eq_3}) yields
\begin{equation}
  \frac{d+f-e}{m-1}-a\ge 1,
\end{equation}
so that $d+f\ge m-1$. Since $c\ge 1$, we have $c+d+e+f\ge m$, i.e., the game is $m-1$-invariant trade robust.
\end{proof}

We remark that the smallest complete simple game with $t=3$, $r=2$ being $3$-invariant trade robust, but not $4$-invariant
trade robust, is given by $\overline n = (2,2,3)$ and
$\mathcal M = \left(
                                              \begin{array}{ccc}
                                                2 & 1 & 0 \\
                                                1 & 0 & 3 \\
                                              \end{array}
                                            \right)
$, as already observed in \cite{FrMo09DAM}. A certificate for a failure of $4$-invariant
trade robustness is given by $\langle w_1,w_1,w_2,w_2;l_1,l_1,l_1,l_2\rangle$, where $w_1=(2,1,0)$, $w_2=(1,0,3)$, $l_1=(2,0,1)$,
and $l_2=(0,2,3)$. The smallest complete simple game with $t=3$, $r=2$ being $4$-invariant trade robust, but not $5$-invariant
trade robust, is attained by Lemma~\ref{lemma_t2r2_itr} for $m=5$.

\begin{lemma}
  \label{lemma_t2r4_itr}
  For $m\ge 3$ the sequence of complete simple games uniquely characterized by
  $\overline n = (2,m,m)$ and
  $
    \mathcal M =
    \left(
      \begin{array}{ccc}
        2 & 1 & 0   \\
        2 & 0 & 2   \\
        1 & 0 & m   \\
        0 & m & m-1 \\
      \end{array}
    \right)
  $
  is $m$-invariant trade robust but not $(m+1)$-invariant trade robust.
\end{lemma}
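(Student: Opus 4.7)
The plan is to mirror the structure of the proof of Lemma~\ref{lemma_t2r2_itr}: first exhibit an explicit vectorial $(m+1)$-trade using only shift-minimal winning types, and then use a carefully chosen weight vector together with a short linear-algebra argument to rule out any nontrivial trade of size at most $m$.

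For the non-robustness direction, I would abbreviate the shift-minimal winning types by $w_1=(2,1,0)$, $w_2=(2,0,2)$, $w_3=(1,0,m)$, $w_4=(0,m,m-1)$ and directly verify the identity $2w_3+(m-1)w_4 = (2,0,1)+m\cdot(0,m-1,m)$. Since $(2,0,1)$ and $(0,m-1,m)$ are (maximal) losing coalitional vectors, the trading transform
$$
\big\langle w_3,w_3,\underbrace{w_4,\ldots,w_4}_{m-1};\,(2,0,1),\underbrace{(0,m-1,m),\ldots,(0,m-1,m)}_{m}\big\rangle
$$
is a valid vectorial $(m+1)$-trade, certifying the failure of $(m+1)$-invariant trade robustness.

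For the other direction I would assign the weight vector $u=(m-1,1,1)$ and compute $u\cdot w_i=2m-1$ for $i\in\{1,3,4\}$ and $u\cdot w_2 = 2m$. A short analysis of the equation $(m-1)a+b+c=2m-1$ over $\{0,1,2\}\times\{0,\ldots,m\}\times\{0,\ldots,m\}$, combined with a check of which of the few candidates are losing, shows that every losing coalitional vector $L=(a,b,c)$ satisfies $u\cdot L\le 2m-1$, with equality only for $L\in\{(2,0,1),(0,m-1,m)\}$. For any vectorial trade $\sum_i\alpha_i w_i = \sum_j\beta_j L_j$ with $\sum_i\alpha_i=\sum_j\beta_j = k$, equating $u$-weighted totals gives $(2m-1)k+\alpha_2=\sum_j\beta_j(u\cdot L_j)\le (2m-1)k$, forcing $\alpha_2=0$ and every $L_j$ used to be one of $(2,0,1),(0,m-1,m)$. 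The three coordinatewise equations together with the count identity then reduce to a small linear system in the five nonneg integer unknowns $\alpha_1,\alpha_3,\alpha_4,\beta_1,\beta_4$; setting $\gamma:=\alpha_3/2$ (forced to be an integer by the first-coordinate equation) one solves explicitly to obtain $\alpha_1=(m-1)\gamma-\alpha_4$, $\beta_1=m\gamma-\alpha_4$, $\beta_4=\gamma+\alpha_4$ with $\gamma\ge 0$ and $0\le\alpha_4\le(m-1)\gamma$. The total trade size is therefore $k=\alpha_1+\alpha_3+\alpha_4=(m+1)\gamma$, and every nontrivial trade satisfies $k\ge m+1$.

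The hard part is guessing the weight vector $u=(m-1,1,1)$; it is uniquely pinned down (up to scaling) by the requirement $u\cdot w_3=u\cdot w_4=u\cdot(2,0,1)=u\cdot(0,m-1,m)$, i.e.\ tightness precisely at the four vectors appearing in the candidate $(m+1)$-trade. Once $\alpha_2$ and the ``uninteresting'' $\beta_j$'s are forced to vanish, the remaining system is rigid enough that its integer solutions form a one-parameter family of size $(m+1)\gamma$, making the matching upper and lower bound almost automatic.
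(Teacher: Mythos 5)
Your proof is correct, but it takes a genuinely different route from the paper's on the hard direction. The paper works with all five maximal losing types $l_1,\dots,l_5$, writes the trade condition as a system of coordinatewise inequalities, and then derives the lower bound $k\ge m+1$ by combining these inequalities with explicitly guessed multipliers (essentially LP duality after relaxing integrality, as the authors themselves remark). You instead introduce a single weight functional $u=(m-1,1,1)$ that is tight (value $2m-1$) exactly on $w_1,w_3,w_4$ and on the two losing types $(2,0,1)$, $(0,m-1,m)$, and strictly larger on $w_2$; one application of $u$ to the trade identity then forces $\alpha_2=0$ and confines the losing side to those two types, after which the coordinate equations plus the count identity form a rigid system whose nonnegative integer solutions are the one-parameter family of size $(m+1)\gamma$. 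This is shorter, less ad hoc, and yields strictly more information (every nontrivial invariant trade has size a multiple of $m+1$, and the family contains both your certificate $2w_3+(m-1)w_4=(2,0,1)+m\cdot(0,m-1,m)$ at $\gamma=1,\alpha_4=m-1$ and the paper's certificate $(m-1)w_1+2w_3=m\cdot(2,0,1)+(0,m-1,m)$ at $\gamma=1,\alpha_4=0$). One small point to make explicit: analysing the level set $(m-1)a+b+c=2m-1$ only identifies the losing types attaining equality; to justify $u\cdot L\le 2m-1$ for \emph{all} losing $L$ you should also check that every type of $u$-weight at least $2m$ is winning (a three-line case split on $a\in\{0,1,2\}$), or equivalently observe that $u$ is non-increasing, hence monotone with respect to $\succeq$, so it suffices to evaluate $u$ on the five maximal losing types, all of which give at most $2m-1$.
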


\begin{proof}
For brevity we set $w_1=(2,1,0)$, $w_2=(2,0,2)$, $w_3=(1,0,m)$, and $w_4=(0,m,m-1)$. The maximal losing coalition types are
given by $l_1=(2,0,1)$, $l_2=(1,0,m-1)$, $l_3=(1,1,m-3)$, $l_4=(0,m,m-2)$, and $l_5=(0,m-1,m)$. Since
$(m-1)\cdot w_1+2\cdot w_3=m\cdot l_1+1\cdot l_5$,
the game is not $(m+1)$-invariant trade robust.

Now assume that there are non-negative integers $a_1,a_2,a_3,a_4,b_1,b_2,b_3,b_4,$ and $b_5$ with $\sum_{i=1}^4 a_i = \sum_{i=1}^5 b_i>0$ and
{
\begin{equation}
  \label{eq_2sum}
  k= \sum_{i=1}^4 a_i\cdot w_i \le \sum_{i=1}^5 b_i \cdot l_i.
\end{equation}
}
It suffices to consider the cases where $k\le m$. We conclude
{ \scriptsize
\begin{eqnarray}
  2a_1+2a_2+a_3&\!\!\!\!\le\!\!\!\!& 2b_1+b_2+b_3,\label{eq_21}\\
  a_1+ma_4&\!\!\!\!\le\!\!\!\!& b_3+m(b_4+b_5)-b_5,\text{ and}\label{eq_22}\\
  2a_2+ma_3+(m-1)a_4 &\!\!\!\!\le\!\!\!\!& b_1+(m-1)(\sum_{i=2}^5 b_i)-2b_3-b_4+b_5 \label{eq_23}.
\end{eqnarray}
}
Let us first assume $a_4=0$. Using Inequality~(\ref{eq_2sum}) and Inequality~(\ref{eq_21}) we obtain
\begin{equation}
  \label{eq_24}
  a_3\ge b_2+b_3+2b_4+2b_5.
\end{equation}
Inserting this into Inequality~(\ref{eq_23}) yields after rearranging
\begin{equation}
  2a_2 + b_2+3b_3+(m+2)b_4+mb_5 \le b_1.
\end{equation}
Since $k\le m$, we have $b_4=b_5=0$. (For $k=m+1$ we have the solution $b_1=m$, $b_2=b_3=b_4=0$, $b_5=1$, $a_1=m-1$, $a_3=2$, and
$a_2=a_4=0$.) With this, Inequality~(\ref{eq_22}) simplifies to $b_3\ge a_1$ and Inequality~(\ref{eq_23}) simplifies to
$b_1+(m-1)b_2+(m-3)b_3\ge 2a_2+ma_3$. Twice the first plus the second inequality gives
\begin{equation}
  b_1+(m-1)b_2+(m-1)b_3 \ge 2a_1+2a_2+ma_3.
\end{equation}
Inserting Inequality~(\ref{eq_2sum}) yields
\begin{equation}
  -b_1+(m-3)(b_2+b_3) \ge (m-3)a_3+a_3.
\end{equation}
Using Inequality~(\ref{eq_24}) we conclude $a_3=b_1=0$. Using Inequality~(\ref{eq_24}) again, we conclude $b_2=b_3=0$, which
is a contradiction to $k=b_1+b_2+b_3+b_4+b_5>0$. Thus, we have $a_4\ge 1$ in all cases.

$2m-2$ times Inequality~(\ref{eq_21}) plus twice Inequality~(\ref{eq_22}) plus Inequality~(\ref{eq_23}) minus $3m-2$ times
Inequality~(\ref{eq_2sum}) yields
\begin{equation*}
  ma_1+ma_2+b_2+b_3+a_4\le (m-1)b_1.
\end{equation*}
Since $a_4\ge 1$ and $b_1\in\mathbb{Z}_{\ge 0}$, we have $b_1\ge 1$.

$3m-4$ times Inequality~(\ref{eq_21}) plus $4$ times Inequality~(\ref{eq_22}) plus twice Inequality~(\ref{eq_23}) minus $6m-4$ times
Inequality~(\ref{eq_2sum}) yields
\begin{equation*}
  ma_3\ge 2a_4+2b_1+(m-2)b_2+(m-2)b_3.
\end{equation*}
Since $a_4,b_1\ge 1$ and $a_3\in\mathbb{Z}_{\ge 0}$, we have $a_3\ge 1$.

$m-\frac{3}{2}$ times Inequality~(\ref{eq_21}) plus Inequality~(\ref{eq_22}) plus Inequality~(\ref{eq_23}) minus $2m-2$ times
Inequality~(\ref{eq_2sum}) yields
\begin{equation}
  \label{eq_b5}
  a_2+\frac{a_3}{2}+a_4\le -\frac{b_2}{2}-\frac{3b_3}{2}+b_5.
\end{equation}
Since $a_3,b_4\ge 1$ and $b_5\in\mathbb{Z}_{\ge 0}$, we have $b_5\ge 2$.

$k=a_1+a_2+a_3+a_4\le m$ plus $m$ times Inequality~(\ref{eq_21})  plus Inequality~(\ref{eq_23}) minus $2m+1$ times
Inequality~(\ref{eq_2sum}) yields
\begin{equation}
  \label{eq_a4}
  2a_2-(m+1)a_4\le -2b_2-4b_3-(m+3)b_4-(m+1)b_5+m.
\end{equation}
Since $b_5\ge 2$ and $a_4\in\mathbb{Z}_{\ge 0}$, we have $a_4\ge 2$.

From Inequality~(\ref{eq_b5}) and $a_2,b_2,b_3\ge 0$ we conclude
$b_5\ge \frac{a_3}{2}+a_4$, so that $b_5\ge a_4+1$ due to $a_3\ge 1$ and $b_5\in\mathbb{Z}_{\ge 0}$.
From Inequality~(\ref{eq_a4}) and $a_2,b_2,b_3,b_4\ge 0$ we conclude
$(m+1)a_4\ge (m+1)b_5-m$. Inserting $b_5\ge a_4+1$ finally yields the contradiction $(m+1)a_4\ge (m+1)a_4+1$.
\end{proof}

We remark that the proof of Lemma~\ref{lemma_t2r4_itr} looks rather technical and complicated at first sight. However,
the underlying idea is very simple. We have to show that the parametric ILP given by inequalities (\ref{eq_2sum})-(\ref{eq_23})
and $9$ non-negative integer variables $a_1,\dots,a_4,b_1,\dots,b_5$ has a minimum value of $m+1$ for the target function
$a_1+a_2+a_3+a_4$. By relaxing the integrality conditions we obtain a corresponding linear program. Minimizing a suitable variable yields a fractional lower bound that can be rounded up. The corresponding dual multipliers are used to conclude
the respective lower bounds directly.

\begin{conjecture}
  \label{conj_itr_t3}
  For each $r\ge 5$, i.e. at least $5$ coalitional types of shift-minimal winning coalitions, there exists a sequence $\left(G_m^r\right)_{m\ge 3}$ of complete simple games, such that  $G_m^r$ is $m$-invariant trade
  robust but not $(m+1)$-invariant trade robust.
\end{conjecture}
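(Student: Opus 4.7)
The plan is to exhibit explicit sequences $G_m^r$ generalising the constructions in Lemma~\ref{lemma_t2r2_itr} and Lemma~\ref{lemma_t2r4_itr}. I would keep $t=3$ and $\overline n=(2,m,m)$ whenever possible, and augment the matrix $\mathcal M$ of Lemma~\ref{lemma_t2r4_itr} by $r-4$ additional shift-minimal winning rows. Each added row must be pairwise $\bowtie$-incomparable with all previous ones, so that Theorem~\ref{theo:CF96}(ii) is respected, and chosen ``interior'' enough that the maximal losing coalitional types $l_1=(2,0,1)$ and $l_5=(0,m-1,m)$ used in the $(m+1)$-trade $(m-1)\cdot w_1+2\cdot w_3=m\cdot l_1+1\cdot l_5$ of Lemma~\ref{lemma_t2r4_itr} remain genuinely maximal losing after the augmentation. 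Once this is arranged, the explicit certificate for the failure of $(m+1)$-invariant trade robustness carries over verbatim.

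For $m$-invariant trade robustness, I would translate the question into the integer linear program
$$
 \min \sum_{i=1}^{r} a_i \quad \text{s.t.} \quad \sum_{i} a_i\, w_i \,\le\, \sum_{j} b_j\, l_j,\ \ \sum_{i} a_i=\sum_{j} b_j,\ \ a_i,b_j\in\mathbb{Z}_{\ge 0},
$$
and argue that its optimum exceeds $m$. The technique used in the proof of Lemma~\ref{lemma_t2r4_itr} (forming explicit non-negative linear combinations of the component-wise inequalities, i.e.\ producing dual multipliers for the LP relaxation) is the natural tool. Concretely, I would first record the dual multipliers implicit in the case analysis of Lemma~\ref{lemma_t2r4_itr} and then require the added shift-minimal rows $w_5,\dots,w_r$ to lie in the halfspace cut out by those multipliers, so that the same dual certificate continues to exclude any invariant trade of length at most $m$.

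The main obstacle, in my view, is ensuring that the two requirements are mutually compatible: the new shift-minimal winning rows must be dominant enough relative to $l_1,l_5$ to preserve their losing status (so the $(m+1)$-trade still witnesses non-robustness), yet weak enough to satisfy the dual inequalities (so no shorter invariant trade appears). A structured way to navigate this is induction on $r$: assuming the statement for some $r\ge 4$, I would add a single row $w_{r+1}$ in the intersection of the two constraint sets. If for $t=3$ this intersection becomes empty, a contingency plan is to allow $t$ to grow with $r$, introducing an auxiliary equivalence class on which the new row acts non-trivially while the previous dual multipliers extend by zero in the new coordinate and the extra losing types $(n_1,\dots,n_{t},0)$ absorb it harmlessly. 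Verifying that one of these two templates always succeeds, and that the resulting matrix can be lexicographically ordered as required by Theorem~\ref{theo:CF96}(iv), is the combinatorial heart of the conjecture.
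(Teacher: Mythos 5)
This statement is left as an open conjecture in the paper---no proof is supplied there---so your proposal must stand entirely on its own, and as written it is a research programme rather than a proof. The two load-bearing steps are both deferred. First, you never actually exhibit the additional rows $w_5,\dots,w_r$: you list the constraints they must satisfy (pairwise $\bowtie$-incomparability with the existing rows, preservation of the maximal-losing status of the types $l_1$ and $l_5$ used in the $(m+1)$-trade, and membership in the halfspace determined by the dual multipliers), but you give no argument that this constraint set is nonempty simultaneously for all $r\ge 5$ and all $m\ge 3$, and you yourself concede that this compatibility question is ``the combinatorial heart of the conjecture.'' A plan that identifies where the difficulty lies without resolving it does not establish the statement.

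Second, even granting the existence of suitable rows, the dual-certificate step is incomplete. Adding shift-minimal winning types changes the set of maximal losing coalitional types: a losing type that was previously dominated by a maximal losing type which has now become winning can itself become maximal losing. Hence the integer program acquires new columns on the $b_j$ side as well as on the $a_i$ side, and the multipliers recorded from the proof of Lemma~\ref{lemma_t2r4_itr} must be re-verified against \emph{all} of these new columns, not only against $w_5,\dots,w_r$; your proposal addresses only the latter. Finally, your contingency of letting $t$ grow with $r$ points in the direction of Lemma~\ref{lemma_itr_increase_t}, but note that that lemma increases $t$ while keeping $r$ fixed, so it cannot by itself manufacture the extra shift-minimal winning rows the conjecture requires; what is genuinely missing is an analogue of that lemma which increases $r$ while preserving the $m$ versus $(m+1)$ threshold, and producing such a construction (together with the lexicographic ordering check from Theorem~\ref{theo:CF96}(iv)) is precisely the open problem.
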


\begin{lemma}
  \label{lemma_itr_increase_t}
  Let $G=(\overline{n},\mathcal{M})$ be a complete simple game with $t$ types of voters and $r$ shift-minimal winning coalition types, being
  $m$-invariant trade robust, but not $(m+1)$-invariant trade robust for some $m>1$. Then, there exists a complete simple game $G'$ with $t+1$ types of voters and
  $r$ shift-minimal winning coalition types, which is $m$-invariant trade robust, but not $(m+1)$-invariant trade robust.
\end{lemma}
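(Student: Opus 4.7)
My plan is to build $G'$ from $G$ by adjoining a single class of null players at the bottom, and then exploit the fact that invariant trade robustness is insensitive to the presence of nulls.

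First, by Lemma~\ref{Lreduction1}, I may assume without loss of generality that $G$ has no null players (removing them preserves both $t$, $r$ and all $m$-invariant trade robustness properties). Then I define $G'$ by taking the player set $N' = N \cup N_{t+1}$ with $N_{t+1}$ consisting of, say, one null player (any $n_{t+1}\ge 1$ works), all dominated strictly by every player of $N$. Since $G$ contains no nulls, the players of $N_{t+1}$ lie in their own equivalence class, so $G'$ genuinely has $t+1$ types $N_1 > \cdots > N_t > N_{t+1}$. I then need to verify that $G'$ satisfies the parametrization conditions of Theorem~\ref{theo:CF96}: the only new requirement, that some shift-minimal winning coalition $\overline{m}_p$ in $G$ satisfies $m_{p,t}>0$, holds precisely because $N_t$ contains no nulls in $G$.

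Next I identify the shift-minimal winning coalition types of $G'$. A coalition type $(w,a)$ is winning in $G'$ iff $w$ is winning in $G$ (nulls do not matter). Hence $(w,a)$ can only be minimal winning when $a = 0$, and $w$ is minimal winning in $G$. Checking the extra shift from class $t$ to class $t+1$ gives $(w_1,\dots,w_{t-1},w_t-1,1)$, which is losing in $G'$ iff $(w_1,\dots,w_{t-1},w_t-1)$ is losing in $G$ -- automatic since $w$ is minimal winning. Thus the shift-minimal winning types of $G'$ are exactly $\{(w,0) : w \in \mathcal{W}^s(G)\}$, giving the same count $r$.

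For the trade-robustness equivalence I argue in both directions. If $G$ admits a trading transform $\langle w_1,\dots,w_{m+1};\ell_1,\dots,\ell_{m+1}\rangle$ with shift-minimal winning pre-trades and losing post-trades, then appending a zero coordinate to every vector yields a trading transform of the same length in $G'$, whose pre-trades are shift-minimal winning (by the characterization above) and whose post-trades are losing (nulls don't affect status); so $G'$ is not $(m+1)$-invariant trade robust. Conversely, suppose $G'$ has a $k$-invariant trading transform $\langle (w_1,0),\dots,(w_k,0);(\ell_1,a_1),\dots,(\ell_k,a_k)\rangle$ of length $k \le m$. The balance condition on the $(t+1)$-th coordinate forces $\sum_i a_i = 0$, hence $a_i = 0$ for all $i$, and then the projection onto the first $t$ coordinates furnishes a $k$-invariant trading transform of $G$, contradicting the $m$-invariant trade robustness of $G$. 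Therefore $G'$ is $m$-invariant trade robust but not $(m+1)$-invariant trade robust, as required.

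The argument is essentially routine once the reduction to the null-free case is made; the only mildly delicate step is confirming that the new shift between classes $t$ and $t+1$ does not destroy shift-minimality, which hinges precisely on $w$ being minimal winning in $G$.
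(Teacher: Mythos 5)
Your construction for the case that $G$ has no null players is sound and is essentially the same idea as the paper's: append a new, strictly dominated lowest class whose members never affect the winning/losing status, observe that the shift-minimal winning types are in bijection with those of $G$ (so $r$ is preserved), and transfer trading transforms back and forth by appending/deleting the last coordinate, using the balance condition to force that coordinate to vanish in any certificate for $G'$. All of that checks out.

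The gap is in your very first step. You invoke Lemma~\ref{Lreduction1} to ``assume w.l.o.g.\ that $G$ has no null players'' and assert that removing them preserves $t$. It does not: if $G$ has null players they constitute the entire bottom class $N_t$, so the null-free reduction $\widetilde{G}$ has only $t-1$ types. Applying your construction to $\widetilde{G}$ then produces a game with $t$ types, which is exactly the parameter set of $G$ itself, not the required $t+1$. Nor can you sidestep the reduction by applying your construction directly to a $G$ that already contains nulls: any newly adjoined null player is symmetric to the existing nulls, so $N_{t+1}$ merges with $N_t$ into a single equivalence class and the number of types does not increase. So the proposal proves the lemma only for null-free $G$. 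The paper closes this case with a different construction: when $\widetilde{m}^i_t=0$ for all $i$, it inserts a new class of two voters, with each row of $\mathcal{M}'$ requiring exactly one of them, positioned just above the null class (and leaves the null class as the new bottom class $N_{t+1}$); one then checks the conditions of Theorem~\ref{theo:CF96} and transfers certificates by deleting or inserting the corresponding column. You would need to add an argument of this kind (or some other non-null bottom class, e.g.\ the paper's choice $\widehat{n}_{t+1}=2$, $\widehat{m}^i_{t+1}=1$ in the null-free case, which has the additional virtue of being iterable without ever creating nulls) to cover all games satisfying the hypotheses.
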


\begin{proof}
Let $\widetilde{m}^1,\dots,\widetilde{m}^r$ denote the rows of $\mathcal{M}$. If $G$ contains nulls, i.e., if $\widetilde{m}^i_t=0$ for all
$1\le i\le r$, we set $\widehat{m}^i_j=\widetilde{m}^i_j$, $\widehat{m}^i_t=1$, $\widehat{m}^i_{t+1}=0$, $\widehat{n}_j=\overline{n}_j$, $\widehat{n}_t=2$, and
$\widehat{n}_{t+1}=\overline{n}_t$ for all $1\le j\le t-1$, $1\le i\le r$. Otherwise we set $\widehat{m}^i_j=\widetilde{m}^i_j$, $\widehat{m}^i_t=1$,
$\widehat{n}_j=\overline{n}_j$, and $\widehat{n}_{t+1}=2$ for all $1\le j\le t$, $1\le i\le r$.

With this, we choose $G'=(\widehat{n},\mathcal{M}')$,
where $\mathcal{M}'$ is composed of the $r$ rows $\widehat{m}^1,\dots,\widehat{m}^r$. We can easily check that $G'$ is indeed weighted.
Let $l=(l_1,\dots,l_{t+1})$ be a losing coalitional vector in $G'$. If $G$ contains no nulls, then $(l_1,\dots,l_{t})$ is a losing vector
in $G$. Otherwise, $(l_1,\dots,l_{t-1},l_{t+1})$ is a losing coalitional vector in $G$. Thus, a possible certificate for the failure of
$m$-invariant trade robustness for $G'$ could be converted into a certificate for the failure of $m$-invariant trade robustness
for $G$ by deleting the $(t-1)$th or $t$th column of the corresponding vectors -- a contradiction. Similarly, we can
convert a certificate for the failure of $(m+1)$-invariant trade robustness for $G$ into a certificate for the failure of
$(m+1)$-invariant trade robustness for $G'$ by inserting ones into the $(t-1)$th or $t$th column of the corresponding vectors.
\end{proof}

\medskip

The same proof is literally valid in the case of trade robustness:

\begin{lemma}
  \label{lemma_tr_increase_t}
  Let $G=(\overline{n},\mathcal{M})$ be a complete simple game with $t$ types of voters and $r$ shift-minimal winning coalition types, being $m$-trade robust, but not $(m+1)$-trade robust for some $m>1$. Then, there exists a complete simple game $G'$ with $t+1$ types of voters and
  $r$ shift-minimal winning coalition types, which is $m$-trade robust, but not $(m+1)$-trade robust.
\end{lemma}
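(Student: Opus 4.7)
The plan is to apply the very same construction used in the proof of Lemma~\ref{lemma_itr_increase_t}. Given $G=(\overline n,\mathcal M)$ with shift-minimal winning rows $\widetilde m^1,\dots,\widetilde m^r$, I would split into two cases. If $G$ contains no null class, append a new last equivalence class of two voters and set $\widehat m^i_j=\widetilde m^i_j$ for $j\le t$, $\widehat m^i_{t+1}=1$, $\widehat n_j=\overline n_j$, $\widehat n_{t+1}=2$. Otherwise, insert the new class of two voters just before the null class, setting $\widehat m^i_j=\widetilde m^i_j$ for $j<t$, $\widehat m^i_t=1$, $\widehat m^i_{t+1}=0$, $\widehat n_j=\overline n_j$ for $j<t$, $\widehat n_t=2$ and $\widehat n_{t+1}=\overline n_t$. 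In either case the resulting matrix $\mathcal M'$ has exactly $r$ rows, pairwise $\succeq$-incomparable and lexicographically ordered because the inserted column is constantly equal to $1$ across rows, so the parametrization conditions of Theorem~\ref{theo:CF96} are satisfied and $G'=(\widehat n,\mathcal M')$ is a well-defined complete simple game with $t+1$ types and $r$ shift-minimal winning coalition types.

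Next I would establish the correspondence between coalitional vectors of $G$ and of $G'$. Given a vector $X'$ in $G'$, let $X$ denote its projection to $G$: in the non-null case simply drop the last coordinate of $X'$, while in the null case drop the new coordinate $X'_t$ and keep $X'_{t+1}$ as the final entry. By inspecting partial sums one verifies that $X'\succeq\widehat m^i$ in $(\Lambda(\widehat n),\succeq)$ implies $X\succeq\widetilde m^i$ in $(\Lambda(\overline n),\succeq)$, and that the converse holds whenever the new coordinate of $X'$ is exactly $1$. Consequently, $X$ is winning in $G$ if and only if the lift of $X$ obtained by inserting a $1$ in the new coordinate is winning in $G'$, and likewise for losing vectors.

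With this correspondence in hand the lemma follows immediately. Given a certificate $\langle X_1,\dots,X_{m+1};Y_1,\dots,Y_{m+1}\rangle$ that $G$ is not $(m+1)$-trade robust, inserting a $1$ in the new coordinate of every vector on both sides yields a trading transform in $G'$ of the same length: componentwise balance is preserved because the two sides are augmented by the same amount in the same coordinate, each lifted $X'_i$ is winning in $G'$ and each lifted $Y'_i$ is losing in $G'$ by the correspondence above; hence $G'$ is not $(m+1)$-trade robust. Conversely, any putative failure of $m$-trade robustness in $G'$ collapses, by passing to projections, to a failure of $m$-trade robustness in $G$, contradicting the hypothesis; hence $G'$ is $m$-trade robust.

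The only delicate point is the partial-sum bookkeeping in the null case, where the inserted column splits the original partial sums in two; but because $\widetilde m^i_t=0$ there, the verification reduces to checking that $X_1+\dots+X_{t-1}+X'_t+X_t\ge\widetilde m^i_1+\dots+\widetilde m^i_{t-1}+1$ holds whenever $X\succeq\widetilde m^i$, which is immediate whether $X'_t\ge 1$ or $X'_t=0$ (in the latter case the level-$t$ inequality for $X'$ already forces a strict slack of $+1$ at level $t-1$). No other subtlety arises, which is why the author rightly remarks that the proof of Lemma~\ref{lemma_itr_increase_t} is literally valid in this setting.
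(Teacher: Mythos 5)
You follow exactly the route the paper takes: the paper disposes of this lemma by asserting that the proof of Lemma~\ref{lemma_itr_increase_t} carries over verbatim, and your construction of $G'$ and your lift/projection scheme coincide with that proof (including the correct reading of the typo $\widehat m^i_t=1$ as $\widehat m^i_{t+1}=1$ in the no-nulls case). The lift direction is sound: inserting a $1$ in the new coordinate of every vector of an $(m+1)$-certificate for $G$ yields an $(m+1)$-certificate for $G'$, because for vectors whose new coordinate equals $1$ winning/losing status is preserved in both directions.

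The gap is in the converse direction, where you claim that a failure of $m$-trade robustness in $G'$ ``collapses, by passing to projections'' to one in $G$. This needs every losing coalitional vector of $G'$ occurring in the certificate to project to a losing vector of $G$, and that is false in general: in the no-nulls case the vector $l=(\widetilde m^1,0)$ is losing in $G'$ (its partial sum at level $t+1$ falls one short of that of $\widehat m^1=(\widetilde m^1,1)$, and it dominates no other row since the $\widetilde m^i$ are pairwise $\succeq$-incomparable), yet it projects to the shift-minimal winning vector $\widetilde m^1$ of $G$. In general a losing vector of $G'$ whose new coordinate is $0$ may have a winning projection; only for new coordinate $\ge 1$ does losing project to losing. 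Your final paragraph tries to wave this away by saying the level-$t$ inequality for $X'$ ``already forces a strict slack of $+1$'', but that is circular: that inequality is exactly the condition in doubt, and it genuinely fails when the projection meets $\widetilde m^i$ with equality of partial sums while the new coordinate is $0$. So a hypothetical $k$-certificate in $G'$ with $k\le m$ containing such post-trade vectors does not project to a certificate for $G$, and an extra argument is needed to exclude or repair such certificates. To be fair, this very step is stated without qualification in the paper's own proof of Lemma~\ref{lemma_itr_increase_t} (``if $l$ is losing in $G'$ then its projection is losing in $G$''), so you have faithfully reproduced the intended argument, gap included; but as written neither version actually establishes that $G'$ is $m$-trade robust.
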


With these results at hand we may prove that larger classes of games according to parameters $r$ and $t$ never reduce the
largest failures of invariant-trade robustness. Table~\ref{T:Table1} summarizes the invariant trade robust test to be used
for a game to determine whether this is weighted. Looking at this table we conclude that $2$-invariant trade robustness is
conclusive exactly for the cases determined in Section~\ref{sec_main} (conjectured values are printed in bold face), while
for others there is no combination of $r$ and $t$ for which some $m>2$ be enough to ensure that the game is weighted.

\begin{table}[!htp]
  \caption{W: weighted; $-$: not possible; 2-I-T-R: either weighted or not 2-invariant trade robust;
$\infty$-I-T-R: there are games being not $m$-invariant trade robust for all $m$; NW: not a weighted game;
conjectured values in {\bf bold face}.}\label{T:Table1}
 \centering
 \vskip 0.2truecm
  \begin{tabular}{|c|c|c|c|c|c|}
  \hline
  $ r \downarrow \,| \, t \rightarrow$ & 1 & 2 & 3 & 4 & $\dots$ \\ \hline
  1 & W & 2-I-T-R & 2-I-T-R & 2-I-T-R & NW \\
 2 & - & 2-I-T-R & $\infty$-I-T-R & $\infty$-I-T-R & $\infty$-I-T-R \\
  3 & - & 2-I-T-R & $\infty$-I-T-R & $\infty$-I-T-R & $\infty$-I-T-R \\
 4 & - & 2-I-T-R & $\infty$-I-T-R & $\infty$-I-T-R & $\infty$-I-T-R \\
  $\dots$  & - & 2-I-T-R & {\bf$\mathbf{\infty}$-I-T-R} & {\bf$\mathbf{\infty}$-I-T-R} & {\bf$\mathbf{\infty}$-I-T-R} \\
  \hline
\end{tabular}
\end{table}

In words, if one wishes to study the class of complete games with a given pair $(r,t)$ then $2$-invariant trade robustness is a very powerful tool to check weightedness for $t\leq 2$ and $r=1$,
but for the rest of combinations $(r,t)$ we need to look at trade-robustness, which is the purpose of the next section.

\section{Further trade characterizations}
\label{sec_further}

It is well known that all simple games with up to $3$ voters are weighted while there are
non-weighted simple games for $n\ge 4$ voters. Restricting the class of simple games to swap robust simple
games, i.e.\ complete simple games, one can state that up to $5$ voters each such game is weighted
while for $n\ge 6$ voters there are non-weighted complete simple games. Going over to $2$-invariant
trade robustness does not help too much. As shown in \cite{FrMo09DAM}, precisely $3$ of the $60$
non-weighted complete simple games with $n=6$ voters are $2$-invariant trade robust but not $3$-invariant
trade robust. For the classical trade robustness the same authors have shown that all $2$-trade robust
complete simple games with up to seven voters are weighted. By an exhaustive enumeration we have shown
that the same statement is true for $n=8$ voters, i.e., there are exactly $2\,730\,164$ weighted games
and the remaining $13\,445\,024$ complete simple games are not $2$-trade robust. As shown in \cite{GvSl11},
there are complete simple games with $n=9$ voters, which are $3$-trade robust but not $4$-trade robust.
The corresponding example, belonging to a parametric family, consists of nine different types of players, i.e.,
no two players are equivalent.

If the number $t$ of types of players is restricted we can obtain tighter weighted characterizations. For $t=1$ the
games are always weighted and for $t=2$ weightedness is equivalent with $2$-trade robustness (or $2$-invariant trade
robustness for complete simple games). Based on this characterization one can computationally determine the number of complete
simple games with two types of voters which are either weighted, i.e.\ $2$-invariant trade robust, or not weighted, i.e.\ not
$2$-invariant trade robust. In \cite{FMR12} this calculation was executed for $n\le 40$ voters. It turns out that the fraction
of non $2$-invariant trade robust complete simple games quickly tends to $1$. An exact, easy-to-evaluate, and exponentially
growing formula for the number of complete simple games with two types of voters is proven in
\cite{FMR12,KuTa13}. From the upper bound $n^5/15+4n^4$, see \cite{FrKu14MSS}, for the
number of weighted games with two types of voters, we can conclude that this is generally true.
We remark that it is not too hard to compute the number of $2$-invariant trade robust complete simple games with $t=2$
for $n\le 200$, see \cite{FrKu13EJOR}, so that we abstain from giving a larger table.

\begin{table}[!htp]
\label{tab:ThreeCols}
 \caption{Classification of complete simple games with three types of up
          to $15$ voters.
          Parameters:
          size ($n$),
          number of complete simple games (\#\emph{CG}),
          number of weighted simple games (\#\emph{WG}),
          number of non $2$-trade robust complete simple games (\#\emph{N-2T}),
          number of non $3$-trade, but $2$-trade, robust complete simple games (\#\emph{N-3T}).
          }
 \begin{center}
 {\scriptsize{
 \begin{tabular}{rrrrrr}
   \multicolumn{1}{c}{$n$} &
   \multicolumn{1}{c}{\#\emph{CG}} &
   \multicolumn{1}{c}{\#\emph{WG}} &
   \multicolumn{1}{c}{\#\emph{N-2T}} &
   \multicolumn{1}{c}{\#\emph{N-3T}} \\
  \hline\\[-.25cm]
   3 & 0 & 0 & 0 & 0 \\
   4 & 6 & 6 & 0 & 0 \\
   5 & 50 & 50 & 0 & 0 \\
   6 & 262 & 256 & 6 & 0 \\
   7 & 1114 & 976 & 138 & 0 \\
   8 & 4278 & 3112 & 1166 & 0 \\
   9 & 15769 & 8710 & 7059 & 0 \\
   10 & 58147 & 22084 & 36063 & 0 \\
   11 & 221089 & 51665 & 169420 & 4 \\
   12 & 886411 & 113211 & 773186 & 14 \\
   13 & 3806475 & 234649 & 3571788 & 38 \\
   14 & 17681979 & 463872 & 17218019 & 88 \\
   15 & 89337562 & 879989 & 88457385 & 188 \\
   16 & 492188528 & 1610011 & 490578137 & 380 \\
   17 & 2959459154 & 2852050 & 2956606348 & 756 \\
 \end{tabular}
 }}
 \end{center}
 \end{table}

For $t=3$ types of voters we have checked by an exhaustive enumeration
that up to $n=10$ voters each complete simple game is either weighted or not $2$-trade robust. For $n=11$ voters we
have the four examples given by $\overline{n}=(3,3,5)$, $\mathcal{M}_1=\begin{pmatrix}2&2&3\\1&2&5\end{pmatrix}$,
$\mathcal{M}_2=\begin{pmatrix}1&1&2\\0&1&4\end{pmatrix}$, $\mathcal{M}_3=\begin{pmatrix}3&3&0\\3&0&4\\2&3&2\\0&3&5\end{pmatrix}$,
and $\mathcal{M}_4=\begin{pmatrix}3&0&0\\2&0&2\\0&3&1\\0&0&5\end{pmatrix}$, which are all $2$-trade robust but not
$3$-trade robust. This resolves an open problem from~\cite{FrMo09DAM}.
We have computationally checked all complete simple games with three types of voters, i.e.\ $t=3$,  and up to $15$~voters, i.e.\ $|N|\le 15$, see
Table~\ref{tab:ThreeCols}. It seems that the number of games which are $2$-trade robust but not
$3$-trade robust grows rather slowly. Indeed, up to $15$ players every $3$-trade robust such game is weighted.

For $t=4$ types and $n=9$ voters there are several complete simple games which are $2$-trade robust but
not $3$-trade robust, e.g.\ the one given by $\overline{n}=(1,2,3,3)$ and $\mathcal{M}=\begin{pmatrix}1&0&1&0\\
0&2&0&1\\0&1&2&0\\0&1&1&2\\0&0&3&2\end{pmatrix}$. For $t=4$ and $n=10$ there are already $120$ complete simple games
which are $2$-trade robust but not $3$-trade robust.

The next cases to look at, are $t=3$ and $r=2$. For both cases we have already presented examples which are $2$-trade robust but not $3$-trade robust.

In the next section we state a conjecture and ask for several questions related to the problem in relation with the two parameters $r$ and $t$ of a complete game.

\section{Open problems} 

Still we found no example which is $3$-trade robust but not $4$-trade robust.

\begin{question}
  Is every $3$-trade robust complete simple game with $t=3$ types of voters weighted?
\end{question}

\begin{question}
  Is every $3$-trade robust complete simple game with $r=2$ shift-minimal winning coalition types weighted?
\end{question}

As a first step into the direction of these two questions, we have looked at the intersection of both classes, i.e.,
complete simple games with $t=3$ and $r=2$. The game corresponding to the previously presented matrices $\mathcal{M}_1$
and $\mathcal{M}_2$ for $n=11$ voters are of this type and can be generalized:

\begin{lemma}
  \label{Lt3r2}
  For each $k_1,k_2,k_3,l\in\mathbb{N}$ the games uniquely characterized by $\overline{n}_1=(n_1,n_2,n_3)$,
  \newline
  $\mathcal{M}_1=\begin{pmatrix}n_1-(l+1)&n_2-1&n_3-(l+2)\\n_1-2(l+1)&n_2-1&n_3\end{pmatrix}$, where
  $n_1=3+k_1+2l$, $n_2=3+k_2$, $n_3=5+k_3+2l$, and $\overline{n}_2=(n_1,n_2,5+2l)$,
  $\mathcal{M}_2=\begin{pmatrix}l+1&1&l+2\\0&1&2(l+2)\end{pmatrix}$ are $2$-trade
  robust but not $3$-trade robust.
\end{lemma}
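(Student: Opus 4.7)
The plan is to treat the two families separately, and for each family to (i) exhibit an explicit trading transform of length three witnessing failure of $3$-trade robustness, and (ii) establish $2$-trade robustness by reducing to a finite case analysis that is uniform in the parameters $k_1,k_2,k_3,l$.

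For (i), the natural candidate for $(\overline{n}_2,\mathcal{M}_2)$ is a $3$-trade of the form $\langle w_1,w_1,w_2;\overline{\ell}_1,\overline{\ell}_2,\overline{\ell}_3\rangle$ with $w_1=(l+1,1,l+2)$ and $w_2=(0,1,2l+4)$, whose sum equals $(2l+2,3,4l+8)$. I would search for a decomposition of this sum into three vectors each of which fails to partial-sum dominate both rows of $\mathcal{M}_2$, using the characterization of winning given by Theorem~\ref{theo:CF96}. Natural shifts such as lowering the third coordinate of $w_1$ by one and compensating in the second coordinate produce losing coalition types, and one checks directly that three such shifts can be balanced to match the componentwise sum. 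The same strategy applies to $(\overline{n}_1,\mathcal{M}_1)$, exploiting the mirrored structure between the two matrices (the first family essentially ``reflects'' the second around the boundary of $\overline{n}$).

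For (ii), by monotonicity we may restrict the pre-trade coalitions of any $2$-trade to minimal winning vectors and the post-trade coalitions to maximal losing vectors. I would first enumerate the minimal winning vectors (each obtained from a shift-minimal winning vector by a finite sequence of admissible one-step shifts respecting the desirability order $N_1>N_2>N_3$) and the maximal losing vectors, observing that these sets lie on only a few ``levels'' parametrized linearly by $l$, while $k_1,k_2,k_3$ enlarge equivalence classes without creating new structural types. Then, for each unordered pair of minimal winning coalition types $(\overline{x},\overline{y})$, I would show by arithmetic on the three coordinates that $\overline{x}+\overline{y}$ admits no decomposition into two maximal losing vectors: typically one coordinate sum is too small, or the partial-sum bound needed to keep both summands losing forces the remaining coordinate beyond its allowed range in $\overline{n}$.

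The main obstacle will be making this case analysis uniform in the four parameters. A clean way to organize it is via LP duality, in the spirit of the proof of Lemma~\ref{lemma_t2r4_itr}: for each candidate target sum $\overline{s}$, encode ``$\overline{s}$ is the sum of two losing vectors'' as an integer program, relax to a linear program, and exhibit explicit nonnegative dual multipliers whose combination of the defining inequalities of the losing vectors certifies infeasibility. I expect the needed multipliers to depend polynomially on $k_1,k_2,k_3,l$ and to reduce the entire parametric proof to a bounded number of algebraic identities that can be verified by hand, producing a contradiction in every case and thereby completing the proof of $2$-trade robustness.
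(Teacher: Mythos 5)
First, note that the paper itself does not prove this lemma (it explicitly writes ``We skip the easy but somewhat technical and lengthy proof''), so there is no printed argument to compare yours against; your attempt has to stand on its own. As written it does not: it is a plan rather than a proof, and the one concrete step you commit to is provably wrong. For the family $(\overline{n}_2,\mathcal{M}_2)$ you propose the pre-trade multiset $\{w_1,w_1,w_2\}$ with sum $(2l+2,3,4l+8)$ and claim losing types can be found by ``lowering the third coordinate of $w_1$ by one and compensating in the second coordinate.'' That shift produces $(l+1,2,l+1)$, whose partial sums $(l+1,l+3,2l+4)$ dominate those of $w_1$, so it is \emph{winning}: moving weight from a later equivalence class to an earlier one only increases a vector in the order $\succeq$. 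Worse, no decomposition of $(2l+2,3,4l+8)$ into three losing types exists at all. Any losing type with $s_1+s_2\ge 1$ has $s_1+s_2+s_3\le 2l+4$ (it must fail the last partial-sum condition of the second row), and the only losing types with larger total are $(0,0,c)$ with $c\le 2l+5$. Counting totals ($6l+13$ is needed) forces exactly one summand $(0,0,2l+5)$ and two losing summands of total $2l+4$ each, summing to $(2l+2,3,2l+3)$; each must then violate $s_1\ge l+1$ or $s_1+s_2\ge l+2$, and a short check of the three combinations (using $x_2+y_2=3$) yields a contradiction in every case. A certificate that does work uses three copies of the \emph{first} row: $\bigl\langle w_1,w_1,w_1;\,(2l+3,0,0),\,(l,3,l+1),\,(0,0,2l+5)\bigr\rangle$, whose post-trade types sum to $3w_1=(3l+3,3,3l+6)$ and are each easily checked to be losing and within $\overline{n}_2$ for all $k_1,k_2,l$. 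Your appeal to a ``mirror'' between $\mathcal{M}_1$ and $\mathcal{M}_2$ is also unjustified: the two families have different third equivalence classes ($n_3=5+k_3+2l$ versus $5+2l$), so they are not duals of one another and the first family needs its own certificate.

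For part (ii), restricting to minimal winning and maximal losing types and organizing the parametric case analysis via LP duality is a sensible strategy (it is the one used for Lemma~\ref{lemma_t2r4_itr}), but you do not enumerate the maximal losing types, do not set up the inequalities, and do not exhibit any dual multipliers, so nothing is actually established. In its current form the proposal neither produces the required $3$-trade nor verifies $2$-trade robustness.
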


We skip the easy but somewhat technical and lengthy proof. Having the nice parametrization at hand, we can easily
state the corresponding generating function, whose coefficients in the resulting power series serve to count the number of those games.
$$
  x^{11}\left(\frac{1}{(1-x)^3(1-x^4)}+\frac{1}{(1-x)^2(1-x^4)}\right)=\frac{x^{11}(x-2)}{(1-x)^3(1-x^4)}
$$
counting the number of such examples, i.e., asymptotically there are $\frac{n^3}{24}+O(n^2)$ such games.
\begin{conjecture}
  \label{Ct3r2}
  All $3$-trade robust complete simple games with $t=3$ and $r=2$ are
  weighted. Additionally, the $2$-trade robust but not $3$-trade robust
  games are exactly those from Lemma~\ref{Lt3r2}.
\end{conjecture}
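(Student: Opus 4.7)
The plan is to give a complete parametric classification of complete simple games with $t=3$ types of voters and $r=2$ shift-minimal winning coalition types, and for each such game decide whether it is weighted, $2$-trade robust, or admits a $3$-trade certificate of non-weightedness, matching the outcomes with Lemma~\ref{Lt3r2}. First, using Lemmas~\ref{Lreduction1} and~\ref{Lreduction2} I would reduce to games with neither null nor veto players; then Theorem~\ref{theo:CF96} forces the matrix
$$\mathcal{M}=\begin{pmatrix}a_1&b_1&c_1\\ a_2&b_2&c_2\end{pmatrix}$$
to have $1\le a_p\le n_1$, $1\le b_p\le n_2-1$, $0\le c_p\le n_3-1$, with $\overline{m}_1\bowtie \overline{m}_2$ and $\overline{m}_1$ lex-larger on partial sums. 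Non-comparability forces precisely one of two ``crossing patterns'' between the partial-sum sequences of $\overline{m}_1$ and $\overline{m}_2$, which I would handle separately; in each pattern the shift-maximal losing coalition types can be computed as a short list depending on $n_1,n_2,n_3$ and the entries of $\mathcal{M}$.

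Second, I would invoke Theorem~\ref{T:shiftTrade} to rephrase non-weightedness as the existence of nonnegative integers $\alpha_1,\alpha_2,\beta_1,\dots,\beta_s$ with $\alpha_1+\alpha_2=\sum_j\beta_j\ge 1$ and $\alpha_1\overline{m}_1+\alpha_2\overline{m}_2=\sum_j\beta_j\overline{l}_j$, where the $\overline{l}_j$ are the enumerated shift-maximal losing vectors. Since the ambient lattice has dimension $t=3$, a Carath\'eodory-type bound caps the support of a minimal such certificate by four vectors in total. The heart of the argument is to show that a minimal certificate can always be chosen with $\alpha_1+\alpha_2\le 3$. The natural route is linear-programming duality: I would write weightedness as feasibility of a $3$-variable LP with $2$ winning constraints and $s$ losing constraints, and analyze the extreme rays of the dual cone. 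Because only two shift-minimal winning vectors participate, these extreme rays should have total weight at most $3$ across every parameter regime.

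Third, I would run the explicit case analysis over the two crossing patterns. Substituting the parametric losing vectors into the trade equation yields, in each pattern, a small system of Diophantine equalities whose nontrivial solutions delineate a narrow parametric family of non-weighted games. Games outside these families should turn out to be weighted, or else admit a genuine $2$-trade (and so fall outside the scope of the conjecture). The non-weighted $2$-trade robust survivors should match bijectively with the two parametric families $(\overline{n}_1,\mathcal{M}_1)$ and $(\overline{n}_2,\mathcal{M}_2)$ of Lemma~\ref{Lt3r2}, completing both halves of the conjecture.

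The main obstacle is tightening the Carath\'eodory-type bound from $4$ down to $3$; this is precisely the reason the statement is only a conjecture. A na\"ive extreme-ray analysis gives potential $4$-tight certificates, and one must use the specific combinatorial geometry of the shift-maximal losing set for $r=2,t=3$ (in particular, identities among neighbouring losing vectors obtained by single coordinate shifts) to replace any such $4$-trade by a genuine $3$-trade, or else pair two pre-trade copies of the same $\overline{m}_p$ with a complementary losing vector to collapse the certificate. If this reduction fails in isolated subcases, they would either have to be absorbed into the Lemma~\ref{Lt3r2} family or identified as true counterexamples, refuting the conjecture.
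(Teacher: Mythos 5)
The statement you are trying to prove is stated in the paper only as a \emph{conjecture}: the authors give no proof, and their sole evidence is an exhaustive computational check up to $n=20$ voters (together with the explicit parametric family of Lemma~\ref{Lt3r2} realizing the claimed $2$-trade-robust-but-not-$3$-trade-robust games). Your text is likewise not a proof but a strategy outline, and you say so yourself: the decisive step --- showing that a minimal trading-transform certificate of non-weightedness can always be taken to have total length $\alpha_1+\alpha_2\le 3$ --- is exactly the part you leave open. So there is no completed argument here to compare against a (nonexistent) proof in the paper; at best you have reproduced the reason the statement is a conjecture.

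Beyond that, the central idea you propose for closing the gap is flawed as stated. A Carath\'eodory-type bound in the $t=3$ lattice controls the number of \emph{distinct} coalition types appearing in a certificate (its support), not the sum of their multiplicities, and it is the latter that equals the trade length $k$. The same paper's Lemma~\ref{lemma_t2r2_itr} exhibits complete simple games with precisely $t=3$ and $r=2$ whose shortest invariant-trade certificates have length $m$ for arbitrary $m$, even though each such certificate, e.g.\ $m\cdot w_2=1\cdot l_1+(m-2)\cdot l_2+1\cdot l_4$, has support of size only $4$. So any dual-extreme-ray argument that ``only two shift-minimal winning vectors participate, hence total weight at most $3$'' cannot work in the generality you describe; it would have to exploit the availability of \emph{all} minimal winning types (full trade robustness rather than invariant trade robustness) in an essential way, and you do not indicate how. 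Your second half --- matching the non-weighted $2$-trade-robust survivors bijectively with the families of Lemma~\ref{Lt3r2} --- is also only asserted, not derived; the required enumeration of shift-maximal losing types and the Diophantine case analysis are not carried out. If you want to make progress on this conjecture, the honest starting point is to explain why passing from shift-minimal to all minimal winning types collapses the unbounded invariant-trade lengths of Lemma~\ref{lemma_t2r2_itr} down to length at most $3$.
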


By an exhaustive enumeration we have checked Conjecture~\ref{Ct3r2} up to $n=20$ voters. From the previous results it is not clear whether a small number of types or shift-minimal winning coalition types
allows to restrict the check of $m$-trade robustness to a finite $m$.

\begin{question}
  For which values of $t$ does a sequence $G_k$ of complete simple games with $t$ types of voters exist such that
  $G_k$ is $k$-trade robust but not $(k+1)$-trade robust for all $k\ge 2$?
\end{question}

\begin{question}
  For which values of $r$ does a sequence $G_k$ of complete simple games with $r$ shift-minimal winning coalition types
  exist such that $G_k$ is $k$-trade robust but not $(k+1)$-trade robust for all $k\ge 2$?
\end{question}

Any progress concerning answers for either the conjecture or the questions posed would be of interest. In Table~\ref{T:Table3} we combine the results from Section~\ref{sec_main} with the questions of this section.
For $r=2$ or $t=3$ we have not found any example being $3$-trade robust but not weighted, but this should be checked formally and become conjectures for future work (in Table~\ref{T:Table3} it appears in black).

\begin{table}[htp!]
\caption{W: weighted; $-$: not possible; 2-I-T-R: either weighted or not 2-invariant trade robust;
{\bf $3$-T-R} for small values of $n$ all games are either weighted or not 3-trade robust -- still a conjecture; NW:
not a weighted game; ?: it is not known if some $m>2$ suffices to assert that $m$-trade robustness implies weighted.}\label{T:Table3}
  \centering
  \vskip 0.2truecm
  \begin{tabular}{|c|c|c|c|c|c|}
  \hline
  $ r \downarrow \,| \, t \rightarrow$ & 1 & 2 & 3 & 4 & $\dots$ \\ \hline
  1 & W & 2-I-T-R & 2-I-T-R & 2-I-T-R & NW \\
 2 & - & 2-I-T-R & {\bf $\mathbf{3}$-T-R} & {\bf $\mathbf{3}$-T-R} &{\bf $\mathbf{3}$-T-R} \\
  3 & - & 2-I-T-R & {\bf $\mathbf{3}$-T-R} & ? & ? \\
 4 & - & 2-I-T-R & {\bf $\mathbf{3}$-T-R} & ? & ?\\
  $\dots$  & - & 2-I-T-R & {\bf $\mathbf{3}$-T-R} & ? & ? \\
  \hline
\end{tabular}
\end{table}

\section{Conclusion}
This paper looks at the characterization of threshold functions within the class of switching functions. We have tried to gather results and efforts that have taken place in different areas of study.
The new results presented in this paper have been exposed in the simple game terminology since some significant advances have been held in this area in the last two decades.
To study the main problem we have restricted ourselves to the class of complete games since non-complete games are not swap-robust and therefore not weighted.

For complete games the test of trade robustness can be computationally relaxed to invariant trade robustness. The strongest condition for invariant trade robustness, $2$-invariant trade robustness, is conclusive for deciding if a given complete game is weighted if the complete game has either a unique coalitional type of shift-minimal winning coalitions or two types of equivalent voters. Larger values for the number of shift-minimal winning coalitions or for the number of equivalence classes show that the tests of trade robustness and invariant trade robustness are complementary. We have found some conspicuous examples of non-weighted games being $k$-trade robust (or $k'$-invariant trade robust for some $k' \geq k$) but not $k+1$-trade robust (or not $k'+1$-invariant trade robust). We have incorporated a number of open questions in hopes of others taking up the challenges that we have left where over.


\section*{Acknowledgements}
This research was partially supported by funds from the Spanish Ministry of Economy and Competitiveness (MINECO) and from the European Union (FEDER funds) under grant MTM2015-66818-P (MINECO/FEDER).


\begin{thebibliography}{10}
\newcommand{\noopsort}[1]{}
\newcommand{\printfirst}[2]{#1}
\newcommand{\singleletter}[1]{#1}
\newcommand{\switchargs}[2]{#2#1}
\providecommand{\url}[1]{\normalfont{#1}}
\providecommand{\urlprefix}{Available at }


\bibitem{AnHo94}
M. Anthony and S. Holden, \emph{Quantifying generalization in linearly weighted
  neural networks}, Complex Systems 8 (1994), pp. 91--114.

\bibitem{BFP08}
D. Bean and J. Friedman and C. Parker, \emph{Simple majority achievable hierarchies},
Theory and {D}ecision 65 (2008), pp. 285--302.

\bibitem{BTW08}
A. Beimel, T. Tassa, and E. Weinreb, \emph{Characterizing ideal weighted
  threshold secret sharing}, {SIAM} Journal on Discrete Mathematics 22 (2008),
  pp. 360--397.

\bibitem{BeWe06}
A. Beimel and E. Weinreb, \emph{Monotone circuits for monotone weighted
  threshold families}, Information Processing Letters 97 (2006), pp. 12--18.



\bibitem{BoBr03}
V. Bohossian and J. Bruck, \emph{Algebraic techniques for constructing minimal
  weights threshold functions}, {SIAM} Journal on {D}iscrete {M}athematics 16
  (2003), pp. 114--126.

\bibitem{CaFr96}
F. Carreras and J. Freixas, \emph{Complete simple games}, Mathematical Social
  Sciences 32 (1996), pp. 139--155.

\bibitem{Chv83}
V. Chv\'{a}tal, \emph{Linear Programming}, W.H. {F}reeman, New York, USA,
  1983.

\bibitem{Cho61}
C. Chow, \emph{Boolean functions realizable with single threshold devices}, in
  \emph{Proceedings of the Institute of Radio Engineers}, Vol.~49, 1961, pp.
  370--371.

\bibitem{dedekind}
R. Dedekind, \emph{\"Uber Zerlegungen von Zahlen durch ihre gr\"o\ss ten gemeinsammen
Teiler}, in Gesammelte Werke, Vol. 1. (1897), pp. 103--148.
  
\bibitem{KKZ14}
B.K. de, T. Klos, and Y. Zhang, \emph{Finding optimal solutions for voting game
  design problems}, Journal of Artificial Intelligence 50 (2014), pp. 105--140.

\bibitem{DuSh79}
P. Dubey and L. Shapley, \emph{Mathematical properties of the {B}anzhaf power
  index}, Mathematics of {O}perations {R}esearch 4 (1979), pp. 99--131.

\bibitem{EiLe89}
E. Einy and E. Lehrer, \emph{Regular simple games}, International Journal of
  Game Theory 18 (1989), pp. 195--207.

\bibitem{Elg60}
C. Elgot, \emph{Truth functions realizable by single threshold organs}, in
  \emph{AIEE Conference Paper 60-1311 (October), revised November 1960; paper
  presented at {IEEE} Symposium on Switching Circuit Theory and Logical
  Design}, September 1961.

%


\bibitem{FrKu13EJOR}
J. Freixas and S. Kurz, \emph{The golden number and {F}ibonacci sequences in
  the design of voting systems}, European Journal of Operational Research 226
  (2013), pp. 246--257.

\bibitem{FrKu14ANOR}
J. Freixas and S. Kurz, \emph{Enumerations of weighted games with minimum and
  an analysis of voting power for bipartite complete games with minimum},
  Annals of Operations Research 222 (2014), pp. 317--339.

\bibitem{FrKu14MSS}
J. Freixas and S. Kurz, \emph{On minimum integer representations of weighted
  games}, Mathematical Social Sciences 67 (2014), pp. 9--22.


\bibitem{FrMo09DAM}
J. Freixas and X. Molinero, \emph{Simple games and weighted games: a
  theoretical and computational viewpoint}, Discrete Applied Mathematics 157
  (2009), pp. 1496--1508.

\bibitem{FrMo10OMS}
J. Freixas and X. Molinero, \emph{Weighted games without a unique minimal
  representation in integers}, Optimization Methods and Software 25 (2010), pp.
  203--215.

\bibitem{FMR12}
J. Freixas, X. Molinero, and S. Roura, \emph{Complete voting systems with two
  types of voters: weightedness and counting}, Annals of Operations Research
  193 (2012), pp. 273--289.

\bibitem{FrPo10TD}
J. Freixas and M. Pons, \emph{Hierarchies achievable in simple games},
Theory and {D}ecision 68 (2010), pp. 393--404.

\bibitem{FrPu98}
J. Freixas and M.A. Puente, \emph{Complete games with minimum}, Annals of
  Operations Research 84 (1998), pp. 97--109.

\bibitem{FrPu08}
J. Freixas and M.A. Puente, \emph{Dimension of complete simple games with
  minimum}, European Journal of Operational Research 188 (2008), pp. 555--568.

\bibitem{FrZw03}
J. Freixas and W. Zwicker, \emph{Weighted voting, abstention, and multiple
  levels of approval}, Social Choice and Welfare 21 (2003), pp. 399--431.

\bibitem{FMP06}
J. Friedman and L. McGrath and C. Parker, \emph{Achievable hierarchies in voting games},
Theory and {D}ecision 61 (2006), pp. 305--318.

\bibitem{Gab61}
I. Gabelman, \emph{The functional behavior of majority (threshold) elements},
  Ph.{D}. dissertation, Electrical Engineering Department, Syracuse University,
   1961.

\bibitem{Gol59}
S. Golomb, \emph{On the classification of boolean functions}, {IRE} Trans.
  Circuit Theory 6 (1959), pp. 176--186.

\bibitem{GvSl11}
T. Gvozdeva and A. Slinko, \emph{Weighted and roughly weighted simple games},
  Mathematical Social Sciences 61 (2011), pp. 20--30.

\bibitem{HaHo92}
P. Hammer and R. Holzman, \emph{Approximations of pseudoboolean functions;
  applications to game theory}, {ZOR} Methods and Models of Operations Research
  36 (1992), pp. 3--21.

\bibitem{HIP81}
P. Hammer, T. Ibaraki, and U. Peled, \emph{Threshold numbers and threshold
  completions}, Annals of Discrete Mathematics 11 (1981), pp. 125--145.

\bibitem{HKR00}
P. Hammer, A. Kogan, and U. Rothblum, \emph{Evaluation, strength and relevance
  of {B}oolean functions}, {SIAM} Journal on Discrete Mathematics 13 (2000),
  pp. 302--312.

\bibitem{Her11}
J. Herranz, \emph{Any 2-asummable bipartite function is weighted threshold},
  Discrete Applied Mathematics 159 (2011), pp. 1079--1084.

\bibitem{HoZw14}
N. Houy and W. Zwicker, \emph{The geometry of voting power: Weighted voting and
  hyper-ellipsoids}, Games and Economic Behavior 84 (2014), pp. 7--16.

\bibitem{Hu65}
S. Hu, \emph{{T}hreshold {L}ogic}, Univ. of California {P}ress, Berkeley and
  {L}os {A}ngeles, USA, 1965.

\bibitem{Isb56}
J. Isbell, \emph{A class of majority games}, Quarterly Journal of Mathematics
  Oxford Ser. 7 (1956), pp. 183--187.

\bibitem{Isb58}
J. Isbell, \emph{A class of simple games}, Duke Mathematics Journal 25 (1958),
  pp. 423--439.

\bibitem{KaKuRiSch15}  
V.M. Kartak, S. Kurz, A.V. Ripatti, and G. Scheithauer, \emph{Minimal proper non-IRUP instances of the one-dimensional cutting
  stock problem.}, Discrete Applied Mathematics 187 (2015), pp. 120–-129.  
  
\bibitem{Kil83}
D. Kilgour, \emph{A formal analysis of the amending formula of {C}anada's
  {C}onstitution {A}ct}, Canadian {J}ournal of {P}olitical {S}cience 16 (1983),
  pp. 771--777.

\bibitem{Ku12}
S. Kurz, \emph{On minimum sum representations for weighted voting games},  
Annals of Operations Research 196 (2012), pp. 361--369.  

\bibitem{OlKuMo16}
S. Kurz, X. Molinero, and M. Olsen, \emph{On the construction of high-dimensional simple games}, in
Proceedings of the 22nd European Conference on Artificial Intelligence (2016), pp. 1--13.

\bibitem{KuNa16}
S. Kurz and S. Napel, \emph{Dimension of the Lisbon voting rules in the EU Council: a challenge and new world record}, 
Optimization Letters 10 (2016), pp. 1245--1256.
  
\bibitem{KuTa13}
S. Kurz and N. Tautenhahn, \emph{On {D}edekind's problem for complete simple
  games}, International {J}ournal of {G}ame {T}heory 42 (2013), pp. 411--437.

\bibitem{Litt88}
N. Littlestone, \emph{Learning when irrelevant attributes abound: A new
  linear-threshold algorithm}, Machine Learning 2 (1988), pp. 285--318.


\bibitem{May52}
K. May, \emph{A set of independent, necessary and sufficient conditions for
  simple majority decision}, Econometrica 20 (1952), pp. 680--684.

\bibitem{Mur71}
S. Muroga, \emph{Threshold logic and its applications}, Wiley--{I}nterscience,
  New York, USA, 1971.

\bibitem{MTK62}
S. Muroga, I. Toda, and M. Kondo, \emph{Majority decision functions of up to
  six variables}, Math. Computation 16 (1962), pp. 459--472.

\bibitem{MTT61}
S. Muroga, I. Toda, and S. Takasu, \emph{Theory of majority decision elements},
  Journal Franklin Institute 271 (1961), pp. 376--418.

\bibitem{MTB70}
S. Muroga, T. Tsuboi, and R. Baugh, \emph{Enumeration of threshold functions of
  eight variables}, {IEEE} {T}ransactions on Computers C-19 (1970).

\bibitem{vNMo44}
J.V. Neumann and O. Morgenstern, \emph{Theory of {G}ames and {E}conomic
  {B}ehavior}, Princeton {U}niversity {P}ress, Princeton, New Jersey, USA,
  1944.

\bibitem{Par94}
I. Parberry, \emph{Circuit {C}omplexity and {N}eural {N}etworks}, The {M.I.T.}
  {P}ress, Cambridge, Massachusetts, London, England, 1994.

\bibitem{PeSi85}
U. Peled and B. Simeone, \emph{Polynomial-time algorithms for regular
  set-covering and threshold synthesis}, Discrete Appl. Math. 12 (1985), pp.
  57--69.

\bibitem{Pel68}
B. Peleg, \emph{On weight of constant sum majority games}, {SIAM} {J}ournal of
  {A}pplied Mathematics 16 (1968), pp. 527--532.

\bibitem{Pic00}
P. Picton, \emph{Neural Networks}, The {M}acmillan {P}ress {LTD} (2nd Edition),
  London, Great Britain, 2000.

\bibitem{Ram90}
K. Ramamurthy, \emph{Coherent structures and simple games}, Kluwer {A}cademic
  {P}ublishers, Dordrecht, The Netherlands, 1990.

\bibitem{RRST85}
J. Reiterman, V. R{\"o}dl, E. Sinajova, and M. Tuma, \emph{Threshold
  hypergraphs}, Discrete Applied Mathematics 54 (1985), pp. 193--200.

\bibitem{RSO94}
V. Roychowdhury, K. Siu, and A.O. (Eds.), \emph{Theoretical {A}dvances in
  {N}eural {C}omputation and {L}earning}, Kluwer {A}cademic {P}ublishers,
  Stanford, {USA}, 1994.

\bibitem{Sch69}
D. Schmeidler, \emph{The nucleolus of a characteristic function game}, {SIAM}
  {J.} {A}ppl. {M}ath 17 (1969), pp. 1163--1170.

\bibitem{Sim90}
G. Simmons, \emph{How to (really) share a secret}, in \emph{Proceedings of the
  8th Annual International Cryptology Conference on Advances in Cryptology},
  Spinger-Verlag, London, UK,  1990, pp. 390--448.

\bibitem{SRK95}
K. Siu, V. Roychowdhury, and T. Kailath, \emph{Discrete {N}eural {C}omputation:
  {A} {T}heoretical {F}oundation}, Prentice Hall, New Jersey, USA, 1995.

\bibitem{Tas07}
T. Tassa, \emph{Hierarchical threshold secret sharing}, Journal of Cryptology
  20 (2007), pp. 237--264.

\bibitem{TaPa08}
A.D. Taylor and A. Pacelli, \emph{Mathematics and {P}olitics, second edition},
  Springer Verlag, New York, USA, 2008.

\bibitem{TaZw92}
A.D. Taylor and W.S. Zwicker, \emph{A characterization of weighted voting},
  Proceedings of the {A}merican {M}athematical {S}ociety 115 (1992), pp. 1089--1094.

\bibitem{TaZw95}
A.D. Taylor and W.S. Zwicker, \emph{Simple games and magic squares}, Journal of Combinatorial Theory, ser. A
  71 (1995), pp. 67--88.

\bibitem{TaZw99}
A.D. Taylor and W.S. Zwicker, \emph{Simple games: desirability relations, trading,
  and pseudoweightings}, Princeton {U}niversity {P}ress, New Jersey, USA, 1999.

\end{thebibliography}
\end{document}